\newenvironment{proof}{\paragraph{Proof:}}{\hfill$\square$}
\newcommand{\ind}{\mathbbm{1}}
\newtheorem{thm}{Theorem}[section]
\newtheorem{prop}[thm]{Proposition}
\newtheorem{exm}{Example}[section]
\newtheorem{rmk}{Remark}[section]
\newcommand{\indicator}[1]{\textbf{1}_{\lbrace {#1} \rbrace }}
\newcommand{\dd}{\mathrm{d}}
\title{Computation of Gaussian orthant probabilities in high dimension}
\author{James Ridgway\thanks{CREST-ENSAE and CEREMADE Universit\'{e} Dauphine, e-mail: james.ridgway@ensae.fr}}
\begin{document}
\maketitle
\begin{abstract}
We study the computation of Gaussian orthant probabilities, i.e. the probability that a Gaussian variable falls inside a quadrant. The Geweke-Hajivassiliou-Keane (GHK) algorithm \citep{Geweke1991,Keane1993,Hajivassiliou1996,Genz1992}, is currently used for integrals of dimension greater than $10$. In this paper we show that for Markovian covariances GHK can be interpreted as the estimator of the normalizing constant of a state space model using sequential importance sampling (SIS). We show for an AR(1) the variance of the GHK, properly normalized, diverges exponentially fast with the dimension. As an improvement we propose using a particle filter (PF). We then generalize this idea to arbitrary covariance matrices using Sequential Monte Carlo (SMC) with properly tailored MCMC moves. We show empirically that this can lead to drastic improvements on currently used algorithms. 

We also extend the framework to orthants of mixture of Gaussians (Student, Cauchy etc.), and to the simulation of truncated Gaussians.
\end{abstract}
\section{Introduction}
There are many applications where computing an orthant probability in high dimension with respect to a Gaussian or Student distribution is an issue of interest. For instance it is common in statistics to compute the likelihood of models, where we observe only an event with respect to multivariate Gaussian random variables. In Econometrics, the multivariate probit model \citep{Train2009}, where we observe a decision among $J$ alternative choices each of them corresponding to a Gaussian utility, is commonly studied. It can be written as an orthant problem. Other such models are the spatial probit \citep{LeSage2011} and Thurstonian models \citep{Yao1999}. Other applications than direct modelization can be found, such as multiple comparison tests \citep{Hochberg1987}, where the integration is done with respect to a Student (see \cite{Bretz2001} for an example). Orthant probabilities are also of interest in other fields than statistics, i.e. stochastic programming \citep{Prekopa1970}, structural system reliability \citep{Pandey1998}, engineering, finance, etc.

The problem at hand is the computation of the integral, 
\begin{equation}
	\int_{\left[\textbf{a},\textbf{b}\right]} \left(2\pi\right)^{-\frac{d}{2}}\left|\Sigma\right|^{-\frac12}\exp\left(-\frac12 (y-m)^t\Sigma^{-1} (y-m)\right)dy.
\label{eq:orthant}
\end{equation}
where $\textbf{a},\textbf{b} \in \mathbb{R}^d$. The Student case will be written as a mixture of the above integral with  an inverse Chi-square (see Section \ref{sec:Student}). 

Many algorithms have been proposed to compute \eqref{eq:orthant}; for a review see \cite{Genz2009}. They can be divided into two groups. The first are numerical algorithms to deal with small dimensional integrals. In dimension $3$ there exist algorithms \citep{Genz2009} where after  sphericization, such that the Gaussian has an identity covariance matrix, one applies recursively numerical computations of the error function. For higher dimensions than three \cite{Minwa2003} propose to express orthant probabilities as differences of orthoscheme probabilities, where an orthoscheme is \eqref{eq:orthant} with correlation matrix $\Omega=(\omega_{ij})$ satisfying $\omega_{ij}=0\quad\forall i,j \quad \vert i-j\vert>1$. This  can be easily computed by recursion. However the decomposition in orthoscheme probabilities has factorial complexity. The second group of algorithms is Monte Carlo based and may be used for dimensions higher than 10. In particular GHK due to \cite{Geweke1991}, \cite{Keane1993} and \cite{Hajivassiliou1996} and conjointly to \cite{Genz1992}, has been widely adopted for the applications described above.

In this paper we show that in the case of Markovian covariances (i.e. covariances that can be written as those of Markovian processes), the GHK algorithm estimates the normalizing constant of a state space model (SSM), using  sequential importance sampling (SIS) with optimal proposal. We show in addition for a first order autoregressive process (henceforth AR(1)) that the normalized variance diverges exponentially fast.

To avoid this behavior we propose to use a particle filter. We extend this methodology to the non Markovian case by using Sequential Monte Carlo (SMC). SMC allows additional gain in efficiency by considering different MCMC moves and proposals. In addition the algorithm is adaptive and simplifies automatically to the GHK if the integral is simple enough. In our numerical experiments we find a substantial improvement.

We start by reviewing the existing GHK algorithm (Section \ref{sec:GHK}), we then discuss the algorithm's behavior for Markovian covariance matrices and propose an extension to higher dimensions (Section \ref{sec:Markovian}). In Section \ref{sec:non-Markovian} we extend this proposal to arbitrary covariance matrices. We propose some extensions for the simulation of truncated distributions and for other distributions (\ref{sec:Extensions}). Finally we present some numerical results and conclude (Sections \ref{sec:NumAna} and \ref{sec:ccl}).

\paragraph{Notations}
 For any vector $x\in\mathbb{R}^p$ for $i\leq p$ we write $x_{<i}\in\mathbb{R}^i$ for the vector of the $i-1$ first components, and we take $a:b=\{a,\cdots,b\}$. We let $x_{<1}=\emptyset$, and also write $x_{i:j}$ for the vector $(x_i,x_{i+1},\cdots,x_{j})$;
 $\Phi,\varphi$ are respectively the $\mathcal{N}(0,1)$ Gaussian cdf and pdf, we write $\varphi(x\vert A)$ for the pdf, $\frac{\varphi(x)}{\Phi(A)}\ind_{A}(x)$, of a Gaussian truncated to the set $A \subset \mathbb{R}$ evaluated in $x$. We will also abuse notation and use $\Phi(A)$ to denote the probability of a set when $A\subset\mathbb{R}$. For instance $\Phi(\left[a,b\right])=\Phi(b)-\Phi(a)$.

\section{Geweke-Hajivassiliou-Keane (GHK) simulator}
\label{sec:GHK}
From now on to simplify notations, and without loss of generality, we limit ourselves to the study of the following multidimensional integral:
\begin{equation}
	F(\textbf{a},\textbf{b},\Sigma)=\int_{\left[\textbf{a},\textbf{b}\right]} (2\pi)^{-\frac{d}2}\vert\Sigma \vert^{-\frac12}\exp\left(-\frac12y^t\Sigma^{-1} y\right)dy
\label{eq:orthant2}
\end{equation} 
with $\textbf{a},\textbf{b}\in \mathbb{R}^d$. Note that the extension to integrals where some components of the vectors $\textbf{a},\textbf{b}$ are respectively $-\infty$ and $\infty$ is direct. 

Let $\Gamma$ be  the Cholesky decomposition of $\Sigma$, i.e. $\Sigma=\Gamma\Gamma^t$ with $\Gamma=(\gamma_{ij})$, $\gamma_{ii}>0$ and $\gamma_{ij}=0$ if $j>i$. We can write the previous equation after the change of variable $\eta=\Gamma^{-1} y$ for which $d\eta=\vert\Gamma\vert^{-1} dy$:
\[
	F(\textbf{a},\textbf{b},\Sigma)=\int_{ \textbf{b} \geq\Gamma\eta\geq \textbf{a}}(2\pi)^{-\frac{d}2}\exp\left(-\frac12\eta^t \eta\right)d\eta,
\]
the $i$-th truncation being such that $\frac1{\gamma_{ii}}\left(a_i-\sum_{j=1}^{i-1} \gamma_{ij} \eta_j\right)\leq \eta_i \leq \frac1{\gamma_{ii}}\left(b_i-\sum_{j=1}^{i-1} \gamma_{ij} \eta_j\right)$, from the positivity of the $(\gamma_{ii})$. Thus we can write:
$$
F(\textbf{a},\textbf{b},\Sigma)=\int \prod_{i=1}^d \varphi(\eta_i)\indicator{B_i(\eta_{<i})}(\eta_i)d\eta_{1:d}=\int \prod_{i=1}^d\Phi\left(B_i(\eta_{<i})\right) \varphi(\eta_i\vert B_i\left(\eta_{<i})\right)d\eta_{1:d},
$$
where the set $B_i(\eta_{<i})=\lbrace \eta_i : \frac1{\gamma_{ii}}\left(a_i-\sum_{j=1}^{i-1} \gamma_{ij} \eta_j\right)\leq \eta_i \leq \frac1{\gamma_{ii}}\left(a_i-\sum_{j=1}^{i-1} \gamma_{ij} \eta_j\right) \rbrace$ is an interval.

The GHK algorithm is an importance sampling algorithm based on this structure. It proposes particles distributed under $\prod_{i=1}^d \varphi(\eta_i\vert B_i\left(\eta_{<i})\right)$ and evaluates the average of the weights $w^n=\prod_{i=1}^d\Phi\left(B_i(\eta^n_{<i})\right)$. The algorithm is described in pseudo-code in  Alg. \ref{alg:GHK}.

\begin{algorithm}[H]
\caption{GHK simulator}
\begin{algorithmic}
\FOR{$m\in 1:M$}
\STATE \underline{Sample}: $\eta^m_{1:d}\sim \prod_{i=1}^d \varphi(\eta_i \vert B_i(\eta_{<i}))$ 
\STATE \underline{Weights$^\star$}: $w^m= \prod_{i=1}^d\Phi\left(B_{i}(\eta^m_{<i})\right)$
\ENDFOR
\RETURN $\frac1M \sum_{i=1}^M w^i$ 
\end{algorithmic}
\label{alg:GHK}
\label{fig:highdim}
\hrule
\vspace*{1mm}
\begin{minipage}{15cm}
\footnotesize{$^\star$Recall that $\Phi(B_i(\eta^m_{<i})$ can be computed as a difference of two one dimensional cdf for the truncation defined above.}
\vspace*{1mm}
\end{minipage}
\end{algorithm}

Algorithm \ref{alg:GHK} outputs an unbiased estimator of integral \eqref{eq:orthant}. 

To generate truncated Gaussian variables the usual approach in the GHK simulator is to use the inverse cdf method. We follow this approach in the rest of the paper except where stated otherwise. When the numerical stability of the inverse cdf is an issue we will use the algorithm proposed in \cite{Chopin2011b}.

In the next section we will study with more care the case where the covariance matrix of the underlying  Gaussian vector has a Markovian structure. 

\section{The Markovian case}
\label{sec:Markovian}

When the covariance matrix is Markovian, that is a matrix for which the inverse is tri-diagonal, the simulation step of Alg. \ref{alg:GHK} is the simulation of a Markov process $(x_{1:t})$. At time $t$ the weights depend on $x_{t-1}$ only. Let us take a lag $1$ autoregressive process (AR(1)) for the purpose of exposition, and study the probability of it being in some hyperrectangle $\left[\textbf{a},\textbf{b}\right]=\left[a_1,b_1\right]\times\cdots\times \left[a_T,b_T\right]$. 
The integral of interest is therefore:
\begin{equation}
\int \prod_{t=1}^T \indicator{\left[a_t,b_t\right]}(x_t)\varphi(x_t; \varrho_t x_{t-1},\sigma_t^2)dx_{1:T}.
\label{eq:Markov_lik}
\end{equation}
The GHK algorithm consists in sampling from the Markov process:
\[
x_t \vert x_{t-1} \sim \varphi\left(\varrho_t x_{t-1},\sigma_t^2\vert B_t(x_t)\right).
\] 
The matrix $\Sigma^{-1}$ is tridiagonal,  the weights at time $t$ are therefore $\Phi(\frac{b_t-\varrho_t x_{t-1}}{\sigma_t})-\Phi(\frac{a_t-\varrho_t x_{t-1}}{\sigma_t})$. Eq. \ref{eq:Markov_lik} can be seen as the likelihood of the state space model \citep{Cappe2005}:
\begin{align*}
&x_t\vert x_{t-1}\sim \varphi(x_t; \varrho_t x_{t-1},\sigma_t^2)\\
&y_t\vert x_{t}\sim \indicator{\left[a_t,b_t\right]}(x_t)
\end{align*}
where $(y_t)_t$ is observed. The GHK can be interpreted as a sequential importance sampler (SIS) using proposal $\varphi(\varrho_t x_{t-1},\sigma_t^2\vert B_t(x_t))$. 

\subsection{Toy example}
\label{sec:toy}
Let us specify a bit more the problem to simplify notation and show some properties of a thus defined algorithm.

Consider the problem of finding the probability that an AR(1),
 $$
 X_t=\varrho X_{t-1}+\varepsilon_t,\qquad \vert\varrho\vert<1
 $$
is inside the hyper-cube $\left[0,b\right]\times \cdots\times\left[0,b\right]$, for some $b>0$. We have set $\sigma=1$, $a=0$ and $\varrho$ a constant.

 The GHK algorithm consists in this case in simulating the above Markov chain constrained to  $\left[0,b\right]$ and in computing under this distribution the products of the weights $\prod_{t=1}^T \left[\Phi(b-\varrho X_t)-\Phi(-\varrho X_t)\right]$.
The simulations are therefore generated by the Markov probability kernel
\begin{equation}
P^b(x,dy)=\frac{\varphi(y;\varrho x,1)}{\Phi(b-\varrho x)-\Phi(-\varrho x)}\ind_{\left[0,b\right]}(y)dy, \quad \vert\varrho\vert<1.
\label{eq:transition}
\end{equation}

For this model we have the following proposition:

\begin{prop}
For the Markov model defined by $\eqref{eq:transition}$, the normalized square product of weights of the normalizing constant has the following behavior:
\begin{equation}
	\liminf_{T\rightarrow\infty}\left\lbrace\mathbb{E}\left[\left(\frac{\prod_{t=1}^T\big(\Phi\left(b-\varrho X_t\right)-\Phi\left(-\varrho X_t\right)\big)^2}{\exp\lbrace2T\mathbb{E}_\pi\log\left(\Phi(b-\varrho X)-\Phi(-\varrho X)\right)\rbrace}\right)\right]\right\rbrace^{\frac1{\sqrt{T}}}>\exp\left\lbrace\mathbb{V}_\pi\left[\psi(X)\right]+\tau\right\rbrace,
\label{eq:prop}
\end{equation}
where subscript $\pi$ denotes integration with respect to the invariant distribution of $P^b(x,\dd y)$, the other expectation is taken relatively to the Markov chain $(X_t)$, and $\psi:x\mapsto\log\left(\Phi(b-\varrho x)-\Phi(-\varrho x)\right)$, $\tau=2\sum_{k=1}^\infty \text{cov}(X_0,X_k)$. 
\end{prop}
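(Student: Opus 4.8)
The plan is to recognise the bracketed quantity in \eqref{eq:prop} as an exponential moment of an additive functional of the ergodic chain driven by $P^b$, and then to bound that moment from below by combining a central limit theorem with a Chernoff-type inequality; this already gives the stated inequality (indeed in the strong form that the $\liminf$ is $+\infty$), and a sharper (Bahadur--Rao type) estimate would reproduce the explicit constant on the right of \eqref{eq:prop}. Concretely, since $2T\,\mathbb{E}_\pi\log\big(\Phi(b-\varrho X)-\Phi(-\varrho X)\big)=2T\,\mathbb{E}_\pi[\psi(X)]$, the quantity inside the $\liminf$ equals
\[
\mathbb{E}\Big[\exp\Big\{2\sum_{t=1}^{T}\big(\psi(X_t)-\mathbb{E}_\pi[\psi(X)]\big)\Big\}\Big]=\mathbb{E}\big[e^{2S_T}\big],\qquad S_T:=\sum_{t=1}^{T}\bar\psi(X_t),\quad \bar\psi:=\psi-\mathbb{E}_\pi[\psi],
\]
up to a multiplicative factor $\exp\{O(1)\}$ coming from the initial law of the chain, which is irrelevant after taking the $1/\sqrt T$-th root (and absent if $(X_t)$ starts from $\pi$). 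On the state space $[0,b]$ the arguments $b-\varrho x$ and $-\varrho x$ range over a fixed compact set, so $\psi$ is bounded and $C^\infty$ there; moreover the transition density of $P^b$ in \eqref{eq:transition} is bounded above and below by positive constants on $[0,b]^2$, so $P^b$ satisfies a Doeblin condition, is uniformly ergodic with a unique invariant law $\pi$, and $\bar\psi$ is bounded, $\pi$-centred, and not $\pi$-a.s.\ constant.

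Next I would apply the central limit theorem for additive functionals of a uniformly ergodic chain to the bounded function $\bar\psi$: $S_T/\sqrt T\Rightarrow\mathcal N(0,\sigma^2)$ with $\sigma^2=\mathbb{V}_\pi[\psi(X)]+2\sum_{k\ge1}\text{cov}_\pi\big(\psi(X_0),\psi(X_k)\big)$, the series converging absolutely by exponential mixing of $P^b$. A first-order expansion of $\psi$ on the compact interval $[0,b]$, together with the exponential decay of $\text{cov}(X_0,X_k)$, shows that $2\sum_{k\ge1}\text{cov}_\pi(\psi(X_0),\psi(X_k))$ differs from $\tau$ only by a remainder controlled by the curvature of $\psi$ and the mixing rate; for the stated bound all that is actually needed is that $\sigma^2$ is a strictly positive constant.

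For the lower bound, note that for every fixed $\lambda>0$,
\[
\mathbb{E}\big[e^{2S_T}\big]\ \ge\ e^{2\lambda\sqrt T}\,\Prob\big(S_T\ge\lambda\sqrt T\big),
\]
and by the previous step $\Prob(S_T\ge\lambda\sqrt T)\to\Prob\big(\mathcal N(0,\sigma^2)\ge\lambda\big)>0$. Hence $\{\mathbb{E}[e^{2S_T}]\}^{1/\sqrt T}\ge e^{2\lambda}\,\Prob(S_T\ge\lambda\sqrt T)^{1/\sqrt T}\to e^{2\lambda}$, so $\liminf_T\{\mathbb{E}[e^{2S_T}]\}^{1/\sqrt T}\ge e^{2\lambda}$ for every $\lambda>0$; choosing $\lambda>\tfrac12\big(\mathbb{V}_\pi[\psi(X)]+\tau\big)$ already yields the strict inequality in \eqref{eq:prop}, and letting $\lambda\uparrow\infty$ shows the $\liminf$ is in fact $+\infty$. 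Equivalently, $\tfrac1T\log\mathbb{E}[e^{2S_T}]\to\Lambda(2)>0$, where $\Lambda$ is the log--spectral radius of the $2\bar\psi$-tilted kernel $f\mapsto\int e^{2\bar\psi(y)}f(y)\,P^b(\cdot,\dd y)$, which is strictly positive because $\bar\psi$ is non-degenerate.

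The reduction and the Chernoff bound are routine; the work concentrates in the middle step, namely (i) verifying the ergodicity of $P^b$ precisely enough to invoke the Markov-chain CLT with the stated variance, and — if one wants the explicit constant $\mathbb{V}_\pi[\psi(X)]+\tau$ rather than merely ``$+\infty$'' — controlling the error made in replacing $\psi$ by its linearization inside the autocovariance series, so that $\tau$ (the autocovariance sum of the $X$-process) appears in place of the autocovariance sum of $(\psi(X_t))$. The qualitative conclusion, namely exponential divergence of the normalized second moment, is insensitive to that constant and is the real content.
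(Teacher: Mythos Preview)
Your argument is correct and in fact establishes more than \eqref{eq:prop}: you show the $\liminf$ is $+\infty$, not merely greater than $\exp\{\mathbb{V}_\pi[\psi(X)]+\tau\}$. The paper proceeds differently after the common CLT step. It applies the continuous mapping theorem to obtain weak convergence of $\exp\{S_T/\sqrt T\}$ to a log-normal variable $e^Z$ with $Z\sim\mathcal N(0,\sigma^2)$, then invokes Portmanteau's lemma for the nonnegative continuous function $x\mapsto x^2$ to get $\liminf_T\mathbb{E}\big[\exp\{2S_T/\sqrt T\}\big]\ge \mathbb{E}[e^{2Z}]$, and finally uses Jensen's inequality (concavity of $x\mapsto x^{1/\sqrt T}$) to pull the power outside the expectation. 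Your Chernoff-type bound $\mathbb{E}[e^{2S_T}]\ge e^{2\lambda\sqrt T}\,\Prob(S_T\ge\lambda\sqrt T)$ is more elementary, bypasses both Portmanteau and Jensen, and by letting $\lambda\to\infty$ extracts the full strength of the exponential moment. What the paper's route additionally buys is the explicit log-normal limiting law, which is used in Remark~\ref{rmk:lognorm} to explain the observed skewness of the GHK estimator; your argument does not produce this but does not need it for \eqref{eq:prop}. You are also right to flag the discrepancy between $\tau=2\sum_{k\ge1}\text{cov}(X_0,X_k)$ as stated and the quantity $2\sum_{k\ge1}\text{cov}(\psi(X_0),\psi(X_k))$ that actually appears in the CLT variance; the paper's proof carries the same identification without justification, and since $\psi$ is nonlinear the two sums do not coincide in general.
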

\begin{proof}
A detailed proof is given in appendix \ref{app:proof}.
\end{proof}

Under $V$-Uniform ergodicity, that follows from our proof, the denominator is the square of the limit of the product of weights and can be interpreted as a scaling factor. Thus the result above shows that this renormalized squared estimator diverges exponentially fast as the dimension of the integral increases.  

\begin{rmk}
\label{rmk:lognorm}
In the course of the proof we showed that the normalizing constant has a log-normal limiting distribution, resulting in a skewed distribution. We expect that the distribution of the estimator will have its mode away from the expected value resulting in some apparent bias. In fact one can show that the normalized third order moment will also grow exponentially.  
\end{rmk}

GHK has quadratic complexity however we can show that for at least one covariance structure the variance diverges exponentially fast. This fully justifies the use of an algorithm of higher computational complexity. In the following section we propose a natural extension to deal with this issue in the Markovian case.

\subsection{Particle filter (PF)}

PF is a common extension of SIS that corrects the weight degeneracy problem. The solution brought by particle filtering \citep{Gordon1993} is to use a resampling step, i.e. to kill those particles with low weights and to replicate those with high contribution. At time $t$ one resamples the particles by sampling from the distribution $\sum_{m=1}^M W^m_t \delta_{x_t^m}(\dd x)$ where $W_t^m$ stands for the $m$-th renormalized weight at time $t$, and $\delta_x(dx^\prime)$ the Dirac measure in $x$. All the weights are then set to one. 

We use an adaptive version of this algorithm where the resampling step is triggered only when the ESS of the weight is lower than some threshold, where the ESS is defined as
\[
	\frac{\left(\sum_{i=1}^N w_i\right)^2}{\sum_i w_i^2}\in\left[1,M\right],
\] 
and indicates the number of draws from the  independent distribution to obtain the same variance. Note that it is closely related to the inverse of equation \eqref{eq:prop}, hence we expect that without resampling it goes to zero with exponential speed.

We define the state space model:
\begin{align*}
x_t\vert x_{t-1} &\sim g_t(x_t\vert x_{t-1}),\\
y_t\vert x_{t}& \sim f_t(y_t\vert x_{t})
\end{align*}

One can use a PF to compute the likelihood of such model,
\[
	L(y_{1:T})=\int \prod_{t=1}^Tg_t(x_t\vert x_{t-1})f_t(y_t\vert x_t)g_0(x_0)dx_{0:T}
\]

A PF with proposal distribution $q_t(x_t,x_{t-1})$ is described in Alg. \ref{alg:PF}. Our application corresponds to the special case where:
\begin{align*}
g_t(x_t\vert x_{t-1})=\varphi(x_t),\qquad
f_t(y_t\vert x_t,x_{t-1})=\indicator{B_t(x_{<t})}(x_t),\qquad
q_t(x_t\vert x_{t-1})=\varphi(x_t\vert B_t(x_{<t})),\qquad
\end{align*}
where the set $B_t(x_{<t})$ depends on $x_{t-1}$ only.

The proposal thus defined corresponds to the optimal one \citep{Doucet2000}, that is the distribution proportional to $f_t(y_t\vert x_t)g_t(x_t \vert x_{t-1})$ in our case proportional to $\varphi(x_t )\indicator{B_t(x_{<t})}(x_t)$ hence the truncated Gaussian. The weights are given by the normalizing constant $\int f_t(y_t\vert x_t)g_t(x_t \vert x_{t-1})\dd x_t$, in our case $\Phi(B_t(x_{<t}))$.

To resample we propose to use systematic resampling \citep{Carpenter1999} (for other approaches see \cite{Douc2005}). Systematic resampling is described in Algorithm \ref{alg:Resample} (Appendix \ref{app:resampling}).

 \begin{algorithm}[h!]
 \caption{Particle Filter}
\begin{algorithmic}
\STATE \underline{Input}: $M$ the number of particles
\STATE \underline{Sample}: Sample $x^i_0\sim g_0(.)$
\FOR{$t=1:T-1$}
\IF{$ESS<\eta^\star$}
\STATE $Z\leftarrow Z\times \lbrace\frac1M \sum_{i=1}^Mw_t^i\rbrace$
\STATE Resample  $a^j_t \sim \sum_i \frac{w^i_t}{\sum_j w^j_t} \delta_{i} $ using algorithm \ref{alg:Resample}, set $w^j_t\leftarrow 1$
\ELSE
\STATE $a_t^{1:M}=1:M$
\ENDIF
\STATE Sample $x^i_{t+1}\sim q_{t+1}(.\vert x^{a^i_t}_{t})$
\STATE Set $w^i_{t+1}\leftarrow w^i_t \frac{f_{t+1}(y_{t+1}\vert x^i_{t+1})g_{t+1}(x^i_{t+1}\vert x^{a^i_t}_{t})}{q_{t+1}(x^i_{t+1}\vert x^{a^i_t}_{t})}$
\ENDFOR
\RETURN $Z\times\frac1M\sum_{i=1}^M w_T^i$
 \end{algorithmic}
\label{alg:PF}
\end{algorithm}

The particle filter thus defined outputs an unbiased estimator of the likelihood \cite{DelMoral1996}, and thus the orthant probability in our case. 

Note that the output of Algorithm \ref{alg:PF} is of the form of a product of terms smaller than one, in our case those terms can be very small and lead to numerical issues. One way of dealing with this issue is to rewrite all the algorithm in log scale.

\begin{rmk}
	As we are here in the special case of being able to sample from the optimal distribution  (as shown in Section \ref{sec:Markovian}) one could resort to the auxiliary particle filter (APF, \cite{Pitt1999}). In fact in this special case the algorithm amounts to exchanging the resampling step and the move step of the particle filter. We tested this approach on some Markov processes and observed no improvements in term of variance on repeated draws. 
\end{rmk}
\begin{exm}
	We can show that the previous process (Section \ref{sec:toy}) benefits from resampling when the ESS goes beneath a given level.  

Figure (\ref{fig:AR1}) shows  that the GHK algorithm's variance increases more quickly as compared to the PF (that seem to have some stable variance on the considered dimension). In addition the distribution of the GHK estimator seem to be skewed towards smaller values as $T$ increases. This results in some bias on the last boxplot. As described in remark \ref{rmk:lognorm} this behavior is due to the log-Normal limiting distribution of the output of the algorithm. The skewness coefficient increases exponentially with $T$.

\begin{figure}[H]
\begin{center}
\begin{tabular}{cc}
\subfloat[variance]{\includegraphics[scale=0.4]{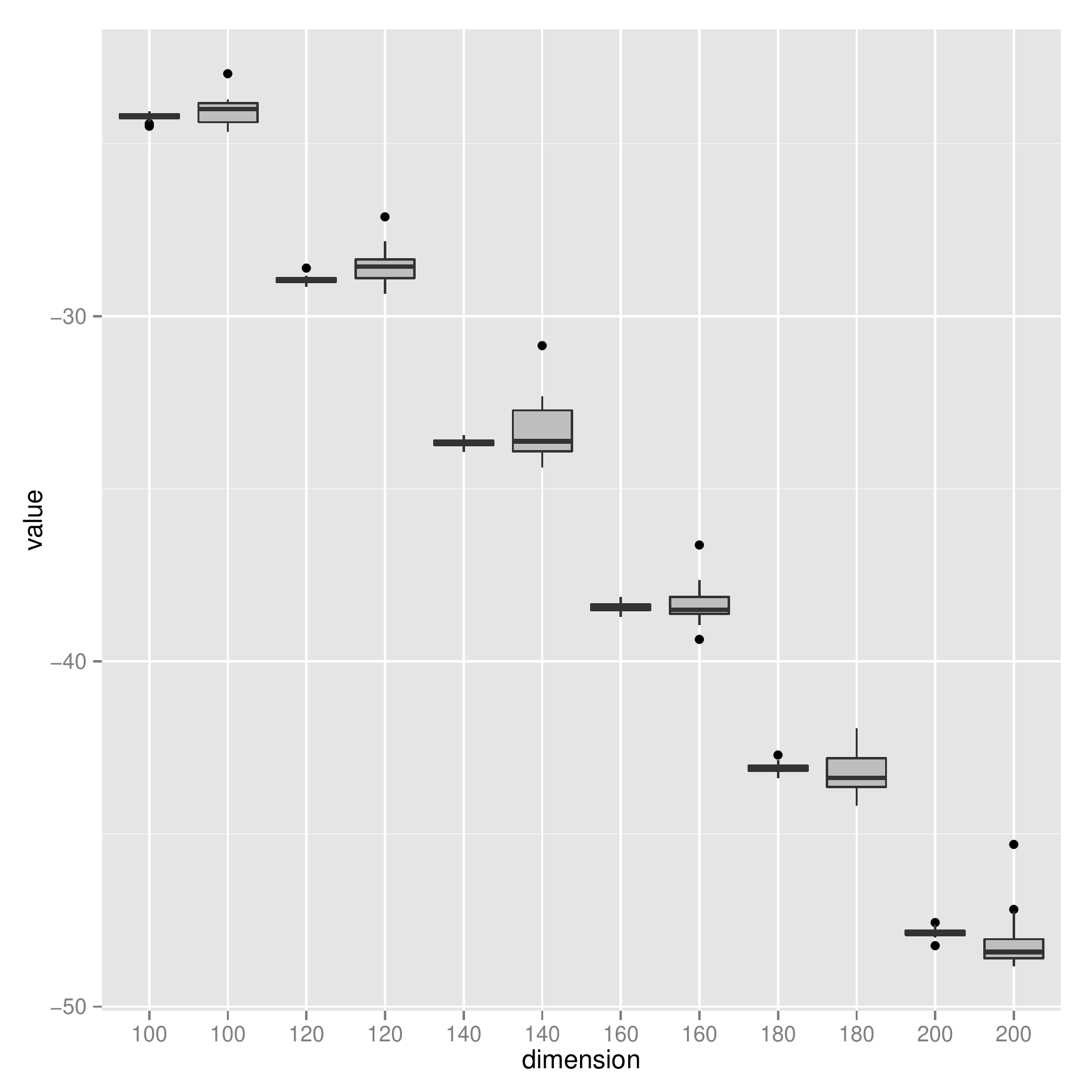}}
\subfloat[ESS]{\includegraphics[scale=0.4]{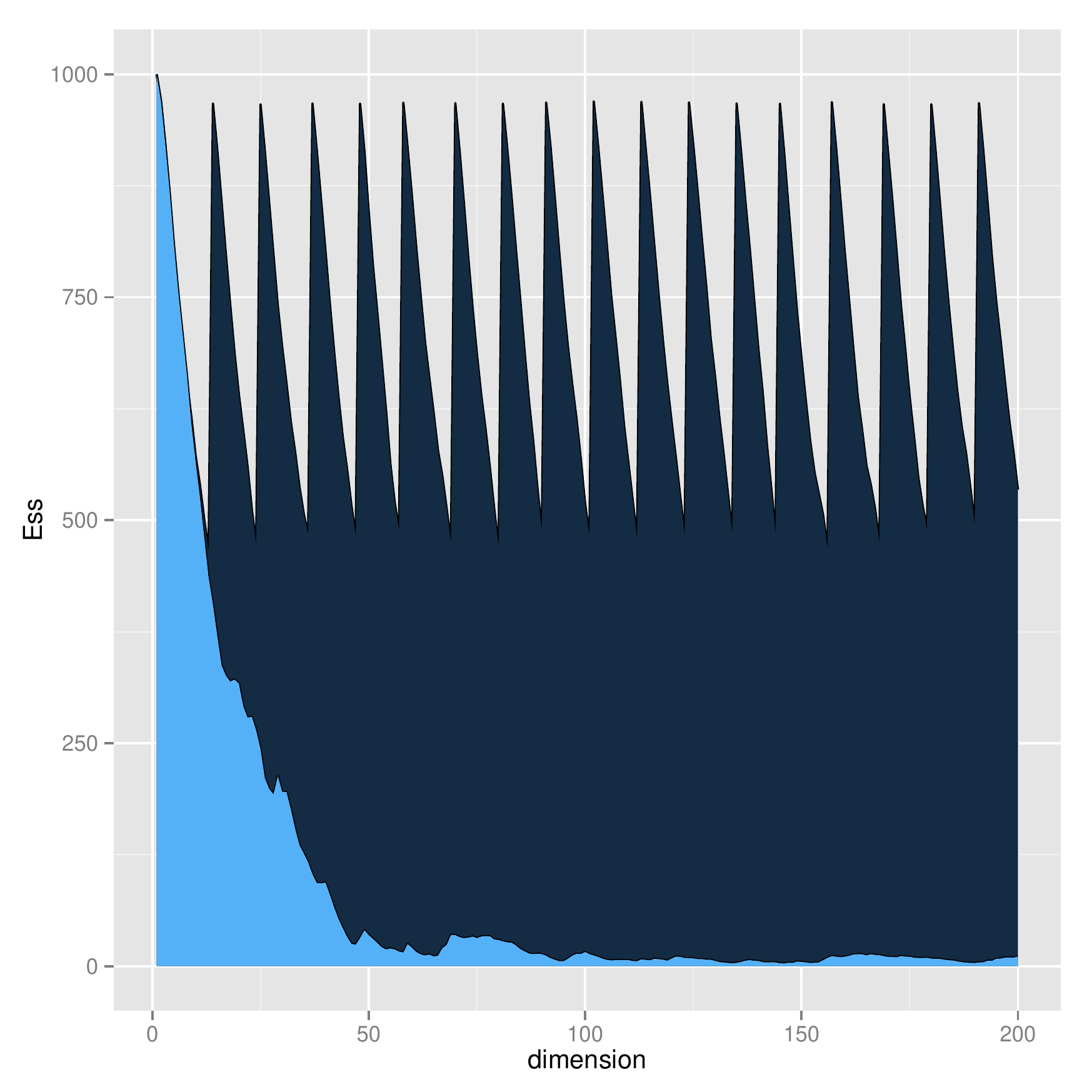}}
\end{tabular}
\caption{Estimates of Orthant probabilities by Particle Filter}
\label{fig:AR1}
\begin{minipage}{15cm}
\footnotesize{ Estimation of the log probability that an AR(1) process (defined previously) with $\varrho=0.7$ has all its component in $\left[0,15\right]$. GHK sampler (grey) and PF (white) on various dimension from 100 to 200. On the right panel the two ESS for dimension 200. On both cases $M$ is set to $1000$.}
\vspace*{3mm}
\hrule
\end{minipage}
\end{center}
\end{figure}
\end{exm}

\subsubsection{Thurstonian Model}

Thurstonian models arise in Psychology and Economics \citep{Yao1999} to describe the ranking of $p$ alternatives by $n$ individuals (referred to as judges). 

Suppose that we observe the rank $r_i=(k_{1i},\cdots ,k_{p,i})$ of some $p$ independent Gaussian random variables,
\[
x_{i,j}=\beta_j+\sigma \varepsilon_{i,j},
\]
where $\varepsilon_{i,j}\substack{i.i.d \\ \sim} \mathcal{N}(0,1)$. The likelihood of one observation is an orthant probability:
\begin{equation}
\label{eq:thurston}
\mathbb{P}_\theta\lbrace X_p>\cdots >X_1\rbrace=\int \prod_{i=1}^p\indicator{x_i>x_{i-1}}\varphi(x_i\vert \beta_j,\sigma^2)dx_{1:p}
\end{equation}
with the convention that $X_0=-\infty.$

This model is similar to the previous one but with $\rho=1$. 

\begin{figure}[H]
\begin{center}
\begin{tabular}{cc}
\subfloat[Estimates]{\includegraphics[scale=0.4]{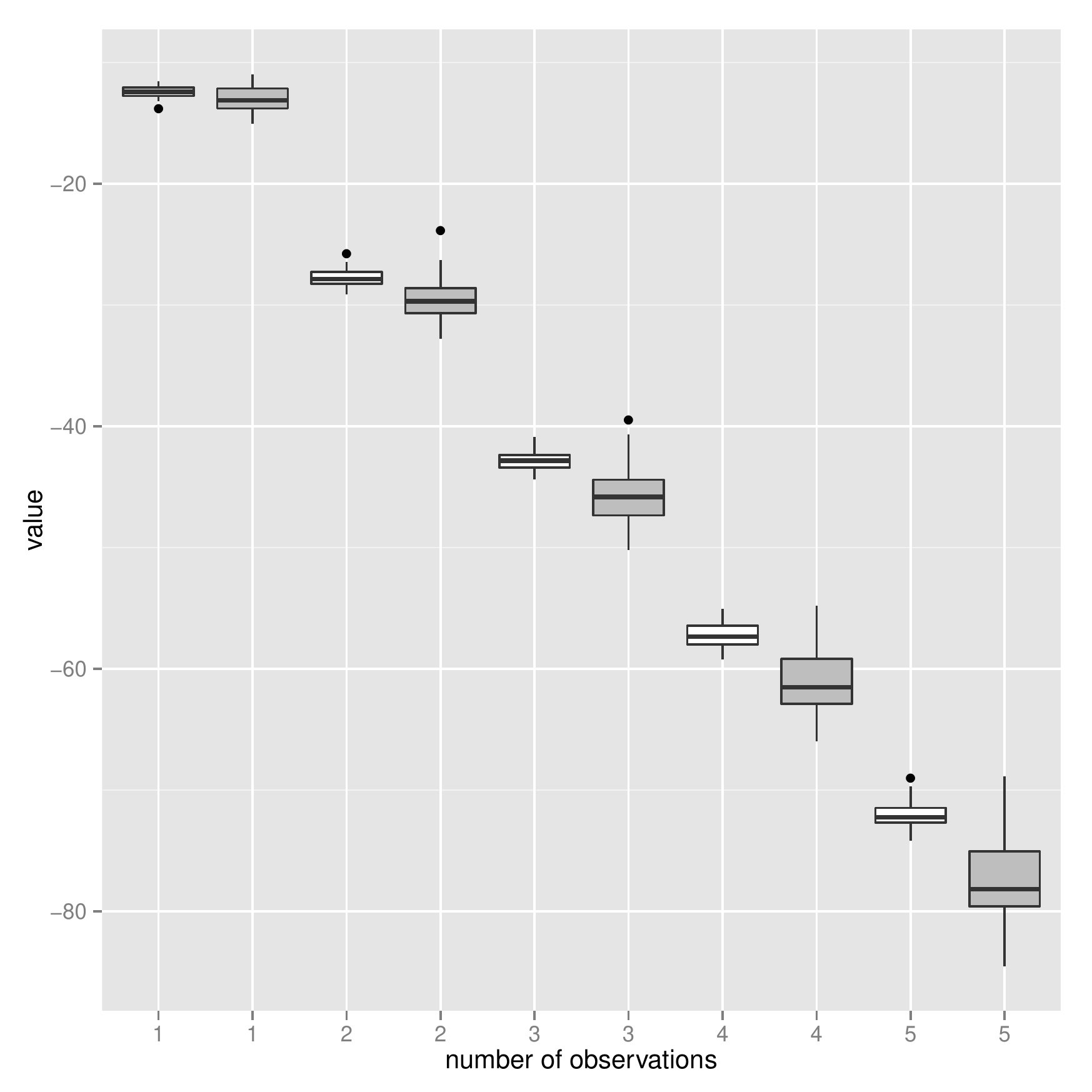}}&
\subfloat[ESS]{\includegraphics[scale=0.4]{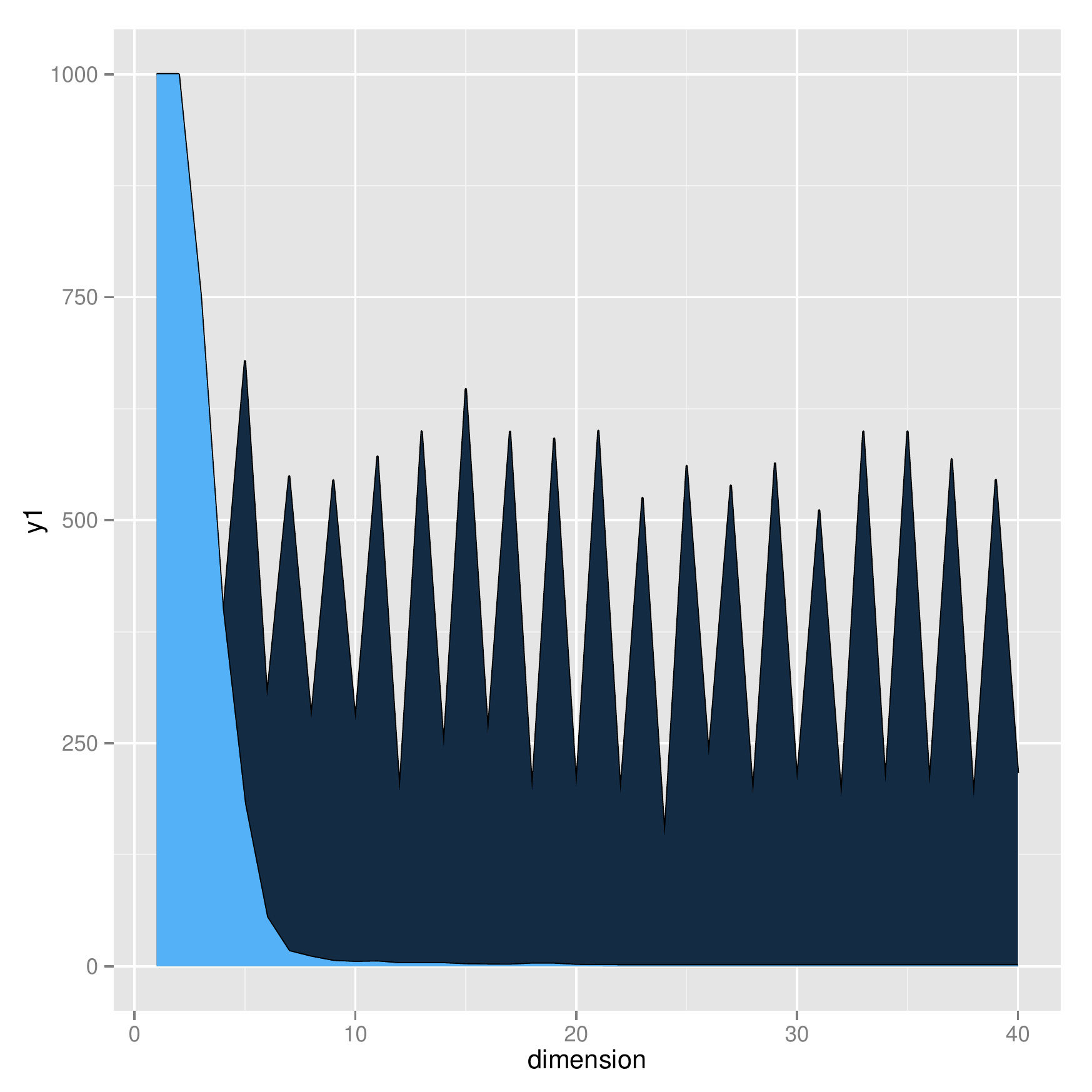}}
\end{tabular}
\caption{Estimates of the likelihood of a Thurstonian model by Particle Filter and by GHK}
\label{fig:thurstone}
\begin{minipage}{15cm}
\footnotesize{ Estimation of the likelihood of a Thurstoninan model with $p=10$ and the number of observations is ranging from $1$ to $5$ the PF (white) and the GHK (grey). The threshold ESS is set to $0.5M$ and the number of particles is set to $M=1000$. The right panel shows the ESS of both algorithms for $T=1$ and $p=40$.}
\vspace*{3mm}
\hrule
\end{minipage}
\end{center}
\end{figure}

We find that the likelihood is estimated with smaller variance. In addition, because of the heavy tail distribution of the GHK simulator's output we observe a bias (see Figure \ref{fig:thurstone}). Again one can explain the strong observed bias by remark \ref{rmk:lognorm} and the fact that we do not replicate enough the experiment to observe the tail of the distribution. In addition as suggested above the ESS of the GHK seems to decrease exponentially fast to zero.

From this observation we could apply this algorithm to perform inference by using Particle MCMC \citep{Andrieu2010}, where this estimation of the likelihood can be plugged in a Random walk Metropolis Hastings and still target the appropriate distribution. 

\section{Non Markovian case}
\label{sec:non-Markovian}
For more general covariances we propose to use Sequential Monte Carlo (SMC) \citep{DelMoral2006}. As previously we will base the algorithm on the proposal of GHK, increasing the dimension of the problem at each time step. However we now have an additional degree of freedom: the order in which we incorporate the variables. In the following section we study an approach to ordering the variables.

\subsection{Variable ordering}
We follow \cite{Gibson1994} in ordering the variables from the most difficult to the simplest, where difficult constraints are considered to be the one that impact the most the probability. 

However we cannot evaluate exactly the probabilities as it is our final goal. Instead \cite{Gibson1994} propose to replace the simulations by the expected value of the truncated Gaussian.

The algorithm starts by choosing the first index $i_1$, and defining $\eta_1$ as follows:
\[
	i_1=\arg \min_{1\leq k\leq T} \Phi\left(\left[\frac{a_k}{\gamma_{kk}},\frac{b_k}{\gamma_{kk}}\right]\right),\qquad \eta_1=\frac1{\Phi\left(\left[\frac{a_{i_1}}{\gamma_{i_1 i_1}},\frac{b_{i_1}}{\gamma_{i_1 i_1}}\right]\right)}\int_{\left[\frac{a_{i_1}}{\gamma_{i_1 i_1}},\frac{b_{i_1}}{\gamma_{i_1 i_1}}\right]} \eta\varphi(\eta)d\eta
\]
i.e. the smallest possible probability that the Gaussian will be in $\left[a_k,b_k\right]$. This enables an approximation of the next probability as a function of $i_2$.
\[
	i_2=\arg \min_{2\leq k\leq T} \Phi\left(\left[\frac1{\tilde{\gamma}_{kk}}(a_k-\tilde{\gamma}_{1,k}\eta_1),\frac1{\tilde{\gamma}_{kk}}(b_k-\tilde{\gamma}_{1,k}\eta_1)\right]\right).
\]
where $(\tilde{\gamma}_{ij})=\tilde{\Gamma}$ is the Cholesky decomposition of the matrix after substituting the first and the $i_1$th variable.  

We end up with the desired vector $(i_1,\cdots,i_T)$ that gives us the order in which to choose the covariances and truncation points. The algorithm is summed up by Alg. \ref{alg:VO} in appendix \ref{app:VO}. The algorithm has quadratic time complexity, however its cost is negligible as compared to the subsequent Monte Carlo algorithm.

We show the use of the reordering in moderate dimensions (50 and 60) on the GHK simulator. This is already a great improvement especially as the dimension increases. Figure (\ref{fig:order}), shows boxplots of 50 repetitions of the GHK for both ordered (white) and non-ordered (grey) inputs. 

\begin{figure}[H]
\begin{center}
\begin{tabular}{cc}
\subfloat[Dimension 50]{\includegraphics[scale=0.3]{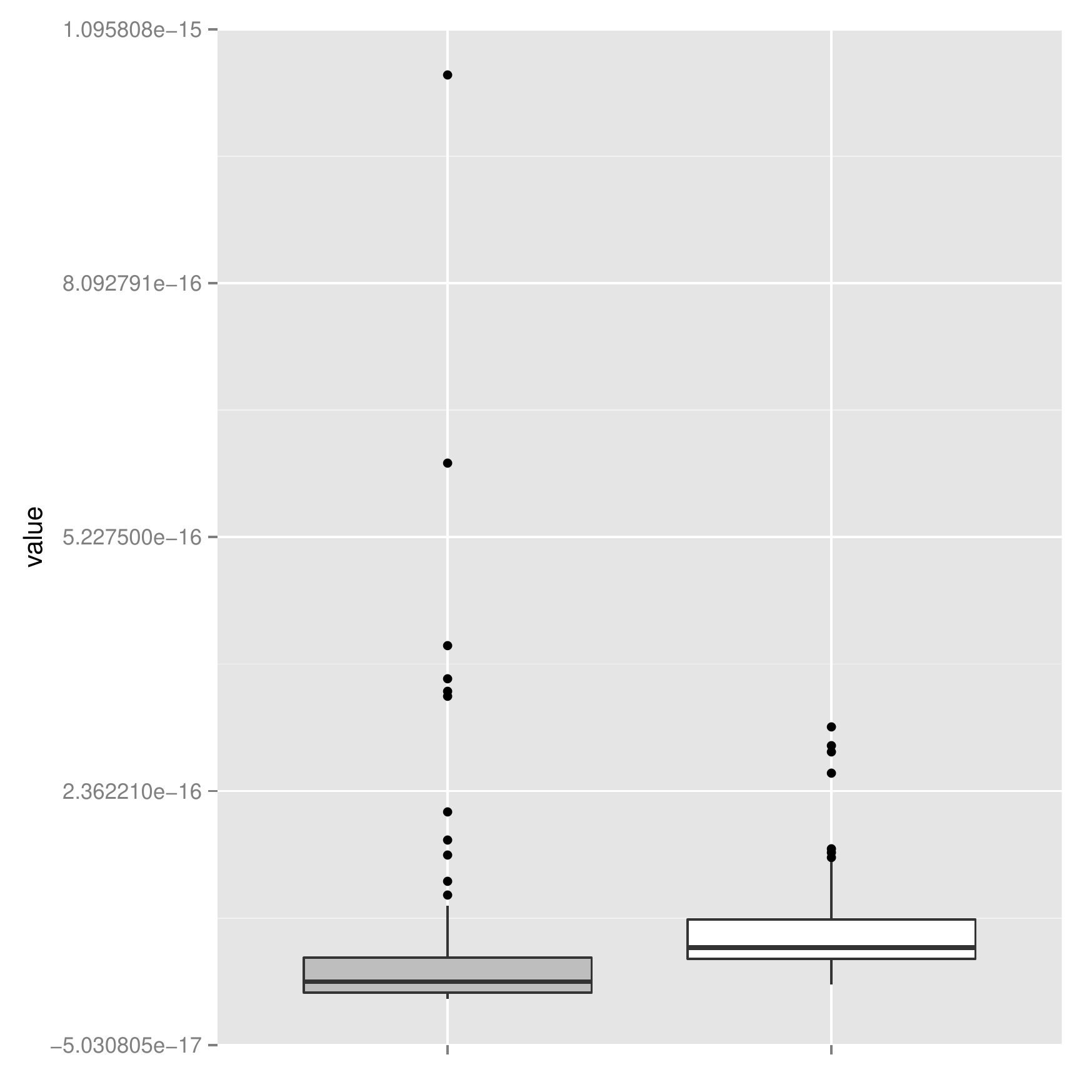} } &
\subfloat[Dimension 60]{\includegraphics[scale=0.3]{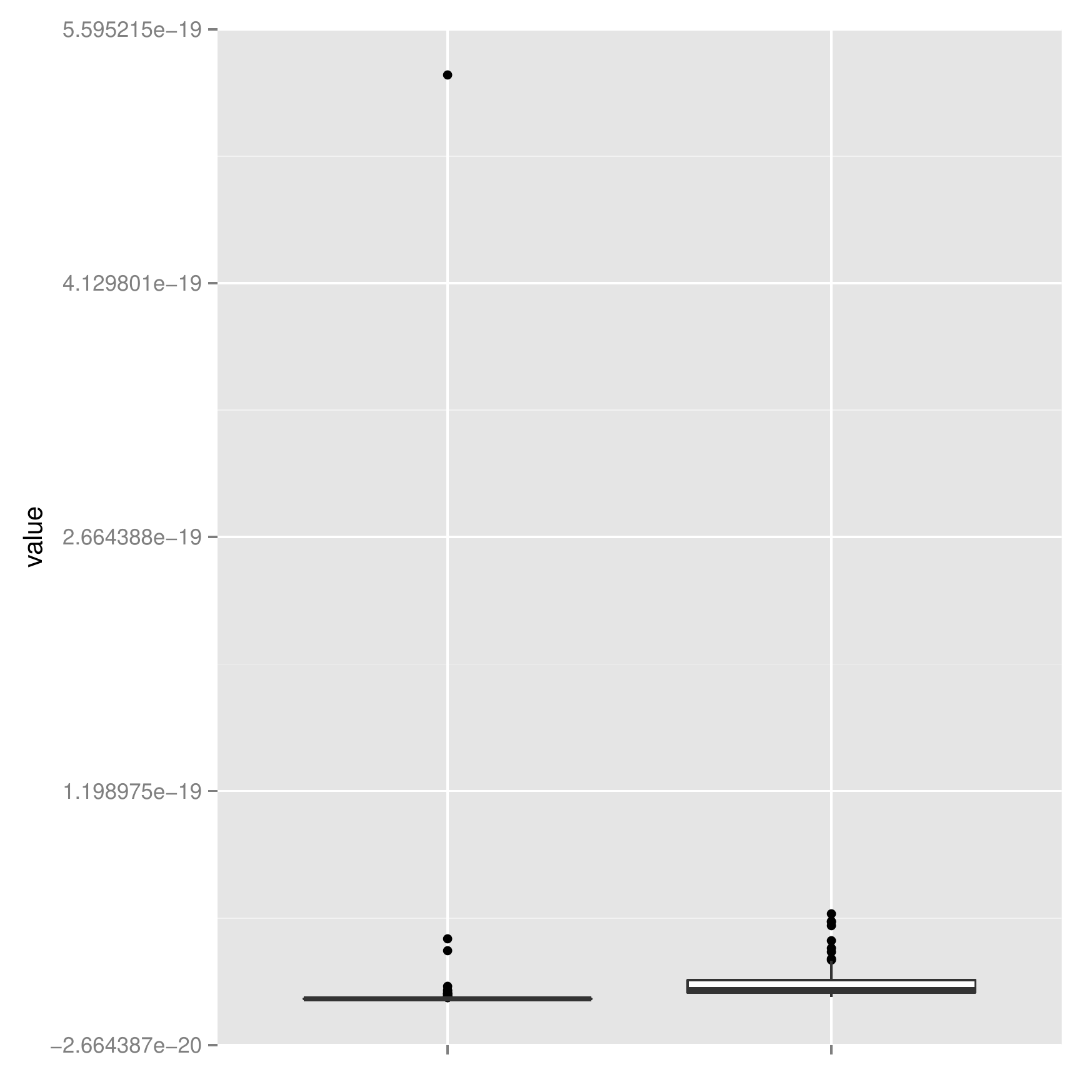} }
\end{tabular}
\caption{Estimates of Orthant probabilities with (white) and without (grey) variable ordering}
\label{fig:order}
\begin{minipage}{15cm}
	\footnotesize{Covariance matrices generated from random samples with heavy tails (see Section \ref{sec:NumAna}). In both case we use a GHK simulator with variable ordering (white), without (grey). The various dimension are simulated with the same algorithm and same seed, such that the small ones are subsets of the others. When the variables are not ordered we observe some outliers, and this phenomenon is reduced with \cite{Gibson1994}'s algorithm.}
\vspace*{3mm}
\hrule
\end{minipage}
\end{center}
\end{figure}

From dimension 50 and upwards we start observing some skewed distributions for the GHK estimator as noted in remark \ref{rmk:lognorm}. The phenomenon seems to be reduced by ordering. 

This effect is relatively dangerous as some draws depart a lot from the mean value. The ordering will be used on all examples from now on to reduce the variance. In Section \ref{sec:NumAna} we have empirical evidence that using an appropriate move step deals with this tail effect, in our examples. 

We have shown that in the particular case of Markov processes we can strikingly benefit from the use of resampling. In the next section we attempt to generalize our finding to a broader range of problems.
\subsection{A sequential Monte Carlo (SMC) algorithm}
\label{sec:smc}
The algorithm discussed in Section \ref{sec:Markovian} can be generalized to non Markovian Gaussian vectors by applying the SMC methodology. Define the following sequence of distribution:
\begin{equation}
	\label{eq:target}
\pi_t(\eta_{1:t})=\frac{\gamma_n(\eta_{1:t})}{Z_t},\qquad\gamma_t(\eta_{1:t})=\prod_{i=1}^t\varphi(\eta_i)\indicator{B_i(\eta_{<i})},
\end{equation}
indexed by $t$, where the unnormalized quantity $\gamma_t(\eta_{1:t})$ is our target integrand. We thus want to compute an estimator of $Z_t$ for a given $t$, 
$$
Z_t=\int \prod_{i=1}^t\varphi(\eta_i)\indicator{B_i(\eta_{<i})} \dd \eta_{1:t}.
$$

SMC samplers are a class of algorithms that generalize particle filters to non dynamic problems (\cite{Neal2001},\cite{Chopin2002}, \cite{DelMoral2006}). Their aim is to sample from a sequence of measures $(\pi_t)_t$ where $\pi_0$ is easy to sample from and $\pi_T$ is our target. The algorithm works by moving from one target to the other by importance sampling, and avoid degeneracy of the weights by resampling if the ESS falls bellow a threshold. In the case of the GHK the sequence of distribution consist of adding a dimension at each step. To ensure particle diversity after resampling the particles are moved according to a MCMC kernel targeting the current distribution. This is the most computationally expensive step. Different alternatives are described in the next section. 

The main steps are described in Algorithm (\ref{alg:SMC}) bellow:
\begin{algorithm}[H]
\caption{SMC for orthant probabilities}
\begin{algorithmic}
\STATE \textbf{Input} $\Gamma$,$a$, $M$, $\alpha^\star$, Set $Z\leftarrow 1$
\STATE Each computation  involving $m$ is done $\forall m \in 1:M$
\STATE \textbf{Init} $\eta^m_1\sim \varphi(.)\textbf{1}_{B_k(\eta_{<1})}$, and $w^m_1=1$
\FOR{$t\in 1:T-1$}
\STATE  At time $t$ the weighted system is distributed as $(w^m_t,\eta^m_{1:t})\sim \prod_{k=1}^t\varphi(\eta_k)\textbf{1}_{B_k(\eta_{<k})}\propto \pi_t(\eta_{1:t})$.
\IF{$ESS(w_t^{1:M})<\alpha^\star$}
\STATE
	$Z\leftarrow Z\times \lbrace\frac1M\sum_{i=1}^M w^{i}_t\rbrace$.\\			
	{\bf Resample}: $\eta_t^{\prime m}\sim \sum_{j=1}^M w_t^j\delta_{\eta_t^j}$,  $w_t^m\leftarrow1$.\\
	{\bf Move}: $\eta_t^m\sim K_t(\eta_t^{\prime m},d\eta_t^m) $ where $K_t$ leaves $\pi_t(\eta_{1:t})$ invariant.
\ENDIF
\STATE $\eta^m_{t+1}\sim \varphi(.\vert B_{t+1}(\eta^m_{<t+1}))$,  $w^m_{t+1}\leftarrow w^m_t\times\Phi(B_{t+1}(\eta_{<t+1}))$.
\ENDFOR
\RETURN $Z\times\lbrace\frac1M\sum_{i=1}^M w^i_T\rbrace$ 
\end{algorithmic}
\label{alg:SMC}
\end{algorithm}

An interesting feature of Algorithm \ref{alg:SMC} is that if the integral is simple enough the ESS will never fall under the threshold and the above algorithm breaks down to a GHK simulator. This allows the algorithm to adapt to simple cases at a minimal effort, that of computing the ESS.

Note also that the estimator is still unbiased (\cite{DelMoral1996}) and can therefore be used in more complex schemes such as PMCMC (\cite{Andrieu2010}) or SMC$^2$ (\cite{Chopin2013a}).

\subsection{Move steps}
\label{sec:move}
The moves step will have an important impact on the non-degeneracy of the particle system. We want to construct a Markov chain that moves the particles as far away from their initial position as possible. In addition this step will be the bulk of the added time complexity compared to GHK, so we want to make it as efficient as possible. 

\subsubsection{Gibbs sampler}
The structure of our target \eqref{eq:target}, where the dependence of the Gaussian components lies within the truncation, does not allows a direct application of the Gibbs sampler of \cite{Robert1995}, without a change of variable. In this section, to simplify notations we consider the special case of $\textbf{b}=\infty$. We write the conditional distribution at time $t$ as proportional to $\prod_{i=1}^t \varphi(\eta_i)\ind_{(\Gamma_{<t,<t}\eta_{<t})>b_{<t}}$, where $\Gamma_{<t,<t}$ is the matrix built with the first $t-1$ lines and columns of $\Gamma$. The conditional is given by:
\[
	\eta_i\vert \eta_{-i}\sim\varphi\left(.\Big\vert \bigcap_{j\geq i}^t \left\lbrace \text{sign}(\gamma_{ji})\eta_i\geq \frac1{\vert\gamma_{ji}\vert}\left(a_j-\sum_{k\neq i}\eta_{k}\gamma_{jk}\right)\right\rbrace\right). 
\]
We therefore have to compute those sets for each component up to $t$ and simulate according to a truncated Gaussian. Computing the set can lead to one or two sided truncations depending on the sign of the $\gamma_{ij}$.

The main drawback about having to compute this step each time we resample is its complexity. This operation has time complexity $\mathcal{O}(d^3)$ per time step. This is easily seen as the set in the above equation is just the result of some matrix inversion for a lower triangular system of dimension $d$. This leads to an SMC algorithm that seem to have a prohibitive complexity of $\mathcal{O}(d^4)$, where the GHK simulator had an $\mathcal{O}(d^2)$ complexity. However we have shown  that GHK's variance diverges exponentially quickly on some examples suggesting that this complexity might be acceptable.  In fact examples in high dimension show that even at constant computational cost the algorithm is able to over-perform GHK (see Section \ref{sec:NumAna}).

\subsubsection{Hamiltonian Monte Carlo}
An alternative to Gibbs sampler is to use Hamiltoninan Monte Carlo (HMC) (see \citep{Neal2010} for a survey), and the idea of \cite{Pakman2012} for truncated Gaussians.

HMC is based on interpreting the variables of interest as the position of a particle with potential the opposite of the log target and by simulating the momentum as a Gaussian with given mass matrix. The proposal of the Metropolis-Hastings is then constructed by applying the equations of motion up to a time horizon $T_{HMC}$ to the problem. This leads to an efficient algorithm that makes use of the gradient of the target to explore its support. We refer the reader to \cite{Neal2010} for more details on the algorithm and describe the approach proposed by \cite{Pakman2012} to adapt the algorithm to truncated Gaussians. 

Based on the fact that the log density of a Gaussian random variable is a quadratic form, the movement equation can be dealt with explicitly. The scheme is written as an exact HMC (i.e. not resulting in numerical integration). Remains then to deal with the truncation. \cite{Pakman2012} show that they can be treated as ``walls'' for the given particle, a reflection principle can be applied for any particle hitting the constraint during the algorithm. In particular we must find the time at which occurs the first ``hit''. In our experiment the time horizon $T_{HMC}$ is set to a uniform  draw on $\left[0,\pi\right]$ as suggested in \cite{Neal2010}. The average value $\pi/2$ is advocated by \cite{Pakman2012}.

The computation of the first hitting time dominates the cost of the algorithm. This is particularly true when the truncation are small as the number of hitting times will be high. Figure \ref{fig:HMC} in appendix \ref{app:HMC} shows a comparison of the SMC algorithm with the Gibbs sampler (grey) and exact HMC (white). Although this Markov chain algorithm seems to perform very well for a wide range of problems and has a neat formalism, we find that it does not outperform Gibbs sampling when used as a move. The specificity of the move step in SMC is that the particles are already distributed according to $\eqref{eq:target}$, therefore the move need not propagate each particle across all the support. In particular the strength of HMC in quickly exploring the target might be less useful in this context.

\subsubsection{Overrelaxation}
Overrelaxation for Gaussian random variables was proposed by \cite{Adler1981} as a way of improving Gibbs sampling for a distribution with Gaussian conditionals.  

For each component the proposal is $\eta^\prime_i\vert \eta_{-i}\sim \mathcal{N}(\mu_i+\alpha(\eta_i-\mu_i),\sigma^2_i (1-\alpha^2))$ for $0\leq\alpha\leq 1$, and with $\mu_i$ and $\sigma^2_i$ the expectation and variance of $\eta_i\vert \eta_{-i}$. The case $\alpha=0$ is the classical Gibbs sampler, the case $\alpha=1$ is a special case of random walk Metropolis-Hasting proposal. One can check that if $\eta_i\sim\mathcal{N}(\mu_i,\sigma^2_i)$ then $\eta^\prime_i$ has the correct distribution. 

Given a particle $\eta$, we propose a new one according to:
$$
\eta^\prime\vert \eta\sim \mathcal{N}(\alpha \eta, (1-\alpha^2)I)
$$
Setting aside the constraint for a moment the invariant distribution of such kernel is an independent (0,1)-Gaussian. If we add an acceptation  step such that we accept if it satisfies the constraint at time $t$, the Markov kernel leaves the current distribution invariant \eqref{eq:target}. 

We find that the fact that overrelaxation is close to a Metropolis adjusted Langevin algorithm (MALA) helps to calibrate the algorithm, $\log \pi (\eta)=-\frac12\eta^T\eta$, hence the proposal in MALA is $\mathcal{N}((1-\frac{\varepsilon}{2})\eta,\varepsilon^2)$. From \cite{Roberts1998} we have that $\varepsilon$ should be $\mathcal{O}(d^{-\frac13})$. To calibrate the algorithm we propose to match the two drifts. We find that $\alpha=\mathcal{O}(1-0.5d^{-\frac13})$, the constant should then be close from a problem to an other because locally we are always in the case of independent Gaussians (locally the constraints have less impact). We find that in our case taking $\alpha=0.004\times (1-d^{-1/3})$ gives the expected behavior and acceptance ratio.

\subsubsection{Repeating the move step} 
\label{subsub:repeat}
\cite{Dubarry2011} have shown, for particle filters, that applying some Metropolis Hastings kernel targeting the filtering distribution on the particles leads to a close to optimal variance (the variance is the same as one coming from an $iid$ sample). This convergence results happens after $\mathcal{O}(\log M)$ iteration of the Markov kernel. These results suggest repeating the move step after each resampling step until some criterion of convergence is satisfied.

We compute the sum of absolute distances that the particles have moved after each step (a similar metric was used in \cite{Schaeefer2013} for the discrete case). We repeat the move until this scalar value stabilizes. The stabilization of the total metric should be associated with the cancellation of the dependence between the particles (leading to a close to independent system). 

\subsubsection{Block sampling}
To diversify the particle system after each resampling we have relied until now on invariant kernels targeting the current distribution $\pi_t$. An alternative to this approach is given by \cite{Doucet2006}, where importance sampling is done on the space of $\eta_{t-L+1:t+1}$ with a given number $L$ of previous time steps. This limits the behavior of the particles all stemming from one path after a few iterations. We briefly describe the idea in the following. 

Suppose at time $t-1$ we have a weighted set of particles such that $(w_{t-1},\eta_{1:t-1})\sim \pi_{t-1}(\eta_{1:t-1})$; instead of proposing a particle $\eta^\prime_t$, propose a block of size $L$, $\eta^\prime_{t-L+1:t}\sim q(.\vert \eta_{1:t-1})$, and discard the particles $\eta_{t-L+1:t-1}$. The distribution of the resulting system is intractable because of the marginalization. However \cite{Doucet2006} note that importance sampling is still possible on the extended set of particles $(\eta_{1:t-1},\eta^\prime_{t-L+1:t})$ by introducing some auxiliary distribution $\lambda_t(\eta_{t-L+1:t-1}\vert \eta^\prime_{1:t})$. This leads to the correct marginal whatever $\lambda_t$ and the algorithm has the following incremental weights:
\[
\frac{\pi_t(\eta_{1:t-L},\eta^\prime_{t-L+1:t})\lambda_t(\eta_{t-L+1:t-1}\vert \eta^\prime_{t-L+1:t},\eta_{1:t-1})}{\pi_{t-1}(\eta_{1:t-1})q(\eta^\prime_{t-L+1:t}\vert \eta_{1:t-1})}.
\]

The authors show that the optimal proposal and resulting weights are given by:
\[
q^{opt}(\eta^\prime_{t-L+1:t}\vert \eta_{1:t-1})=\pi_t(\eta^\prime_{t-L+1:t}\vert \eta_{1:t-L}),
\]
$$
w_t=w_{t-1} \frac{\pi_t(\eta^\prime_{1:t})}{\pi_{t-L}(\eta_{1:t-L})}.
$$
In our case the optimal proposal can then be shown to be:
$$
q_t^{opt}(\eta_{t-L+1:t}^\prime\vert \eta_{1:t-L})=\frac{\prod_{i=1}^t\ind_{B_i(\eta_{<i})} \prod_{i=t-L+1}^t \varphi(\eta_i)}{\int\prod_{i=1}^t\ind_{B_i(\eta_{<i})} \prod_{i=t-L+1}^t \varphi(\eta_i)d\eta_{t-L+1:t}}.
$$
Notice that this is the density of a truncated Gaussian distribution, yielding a weight depending on an orthant probability (denominator). In most cases this is not available and in our particular case it is the quantity of interest. We can however compute explicitly this integral for $L=1$ and $L=2$. The former is the usual case (block of size one). The case $L=2$ did not bring any improvement in terms of variance in all our simulation. We concentrated on the extension to blocks of higher dimension.

In this case we have to resort to approximations of the proposal. The first idea would be to approximate it by a Gaussian using expectation propagation \citep{Minka2001}. However this approach did not perform better than the use of Gibbs sampler mentioned earlier. Another approach to approximate the distribution is to consider the Gibbs sampler on a block of size $L$ with the GHK proposal.

\subsubsection{Partial conclusion}
We have shown that the proposed Gibbs sampler outperforms HMC. Concerning block sampling the different approaches were tested on several dimensions only to find that the best performing approach was to use partial Gibbs sampling, i.e. a Gibbs sampler on a block. In the numerical tests we provide in Section \ref{sec:NumAna} we show only the latter. 

In our simulations we propose to repeat each kernels as was explained in Section \ref{subsub:repeat}. We propose to test the Gibbs sampler and the overrelaxed random walk.

In addition we have studied other kernels based on the geometry of the problem; in particular, one can draw random walks on the line between the current particle and the basic solution of our constraint. Those approach did not however outperform the proposals discussed above.  

\section{Extentions}
\label{sec:Extensions}
\subsection{Student Orthant}
\label{sec:Student}

We can easily extend our approach to the computation of orthant probabilities for other distributions, in particular for mixtures of Gaussians, that is probabilities that can be written as:
\begin{equation}
\label{augm}
\int f_U(u)\int f_{H|U}(\eta\vert u)\indicator{b>\eta>a}d\eta du,
\end{equation} 
where $f_{H\vert U}$ is a Gaussian. Several distributions can be created as such. For instance, the Student distribution  where the variance is marginally distributed as an inverse-$\chi^2$. Hence the distribution: 
\[
	f_{H|U}(\eta|u)\indicator{b>\eta>a}=\prod_{i=1}^n\varphi(\eta_i)\indicator{B_i^u(\eta_{<i})},
\]
where $B^u(\eta_{<i})$ is $B_i(\eta_{<i})$ where we multiply $a$ by $\frac u\nu$ and $f_U(u)=\chi^2_\nu(u)$. They are an interesting application to those algorithms because they come at a minimal additional cost and are of use in multiple comparison \citep{Bretz2001}.

Another example is the logistic distribution where $f_U(u)$ is some transformation of a Kolmogorov-Smirnov distribution (see \cite{Holmes2006}). This could be used to perform Bayesian inference on multinomial logistic regression.  

To deal with this integral we can extend the space on which the SMC is carried out at time $t$. Hence the move step is performed on the extended space $f_U(u)f_{H\vert U}(\eta\vert u)$. In our Student example it amounts to taking as a target distribution 
\[
	\pi_n(\eta_{1:n},u)\propto \prod_{i=1}^n\varphi(\eta_i)\indicator{B_i^u(\eta_{<i})}\chi^2_\nu(u).
\]
The normalizing constant that the SMC algorithm approximates is
\[
	Z_n=\int\prod_{i=1}^n\varphi(\eta_i)\indicator{B_i^u(\eta_{<i})}\chi^2_\nu(u)\dd\eta_{1:n}\dd u.
\]
At each move step we therefore move the particles using a Metropolis-Hastings algorithm targeting $p(u\vert \eta_{1:n})$ and perform the remaining Gibbs sampler updates conditionally on $U$. This additional step allows for further mixing. Benefits from this step are already found in relatively low dimension as shown in Section \ref{sec:NumAna}. 
\subsection{SMC as a truncated distribution sampler}

A natural extension is to use Alg. \ref{alg:SMC} to compute other integrals with respect to truncated Gaussians. At time $t$ the output of the algorithm is a weighted sample $(w^i_t,\eta^i_{1:t})_{i\in \left[1,M\right]}$ approximating $\pi_t(\eta_{1:t})\propto\prod_{i=1}^t\varphi(\eta_i)\ind_{B(\eta_{<i})}$. Hence any integral of the form $\mathbb{E}_{\pi_t}\left(h(\eta)\right)$, where expectation is taken with respect to $\pi_t$, can be approximated by $\sum_{i=1}^M\frac{w^i_t }{\sum^M_{j=1}w_t^j}h(\eta_{1:t}^i)$. The same argument goes for the truncated Student.  

We test the idea for computing the expectation of truncated multivariate Student. We use a Gibbs sampler as a benchmark based on \cite{Robert1995}'s sampler by adding a MH step to deal with $u$ (see previous section). The Gibbs update is done after a change of variable that leaves the truncations independent. This can be shown to be more efficient. We allocate $100$ times more computational time to the Gibbs sampler than the SMC. 

In Figure \ref{fig:mean} we see that after thinning one out of $1000$ points the ACF and trace plots point to bad exploration of the target's support. This behavior shows that the convergence is too slow for the algorithm to be of practical use. On the other hand the SMC is still stable as is shown in the next section.

In addition of outperforming the Gibbs sampler for fairly moderate dimension, the SMC algorithm was found to be stable for approximating the expectation in dimensions up to $100$.
\begin{figure}[H]
\begin{center}
\begin{tabular}{cc}
\subfloat[Trace plot]{\includegraphics[scale=0.25]{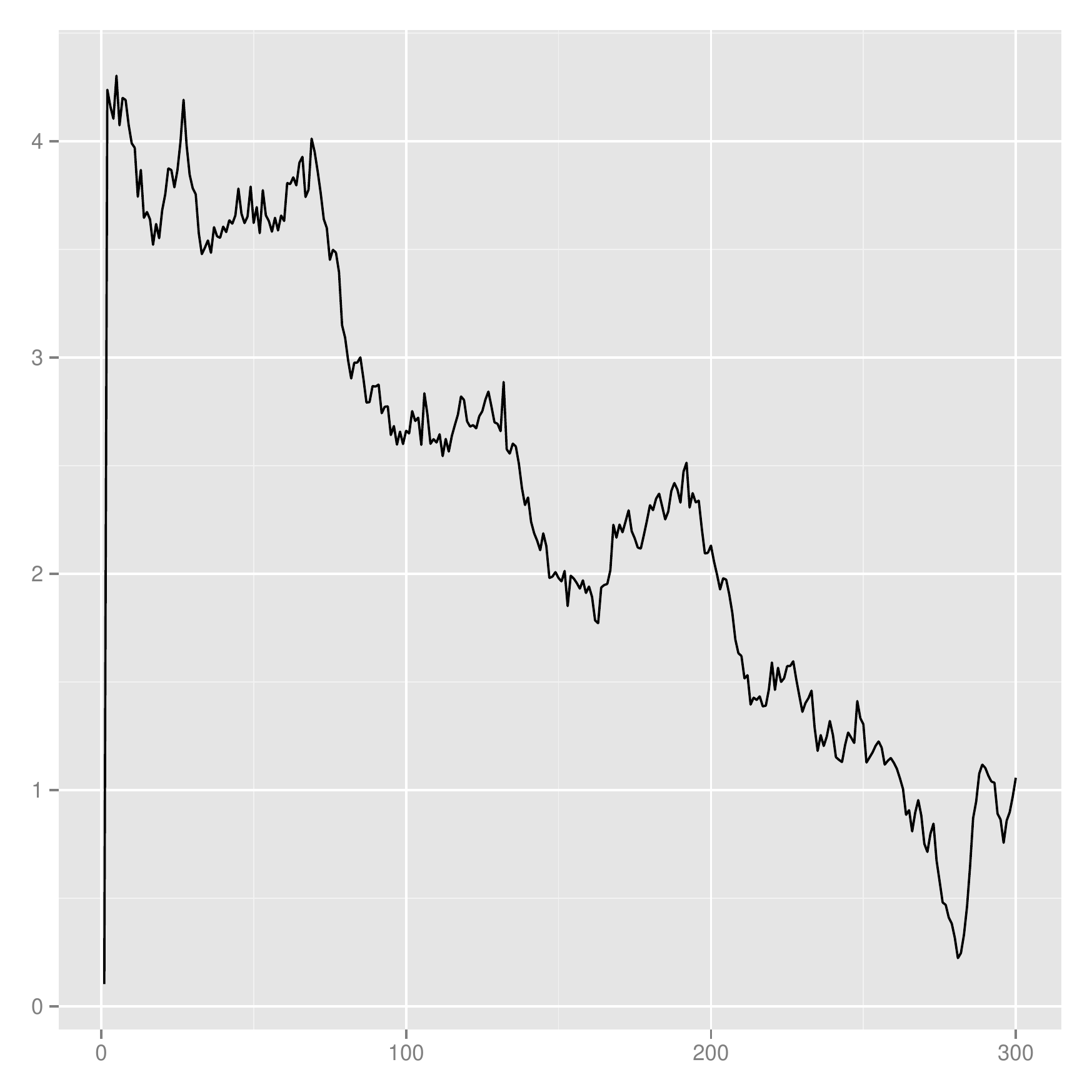}}&
\subfloat[ACF]{\includegraphics[scale=0.25]{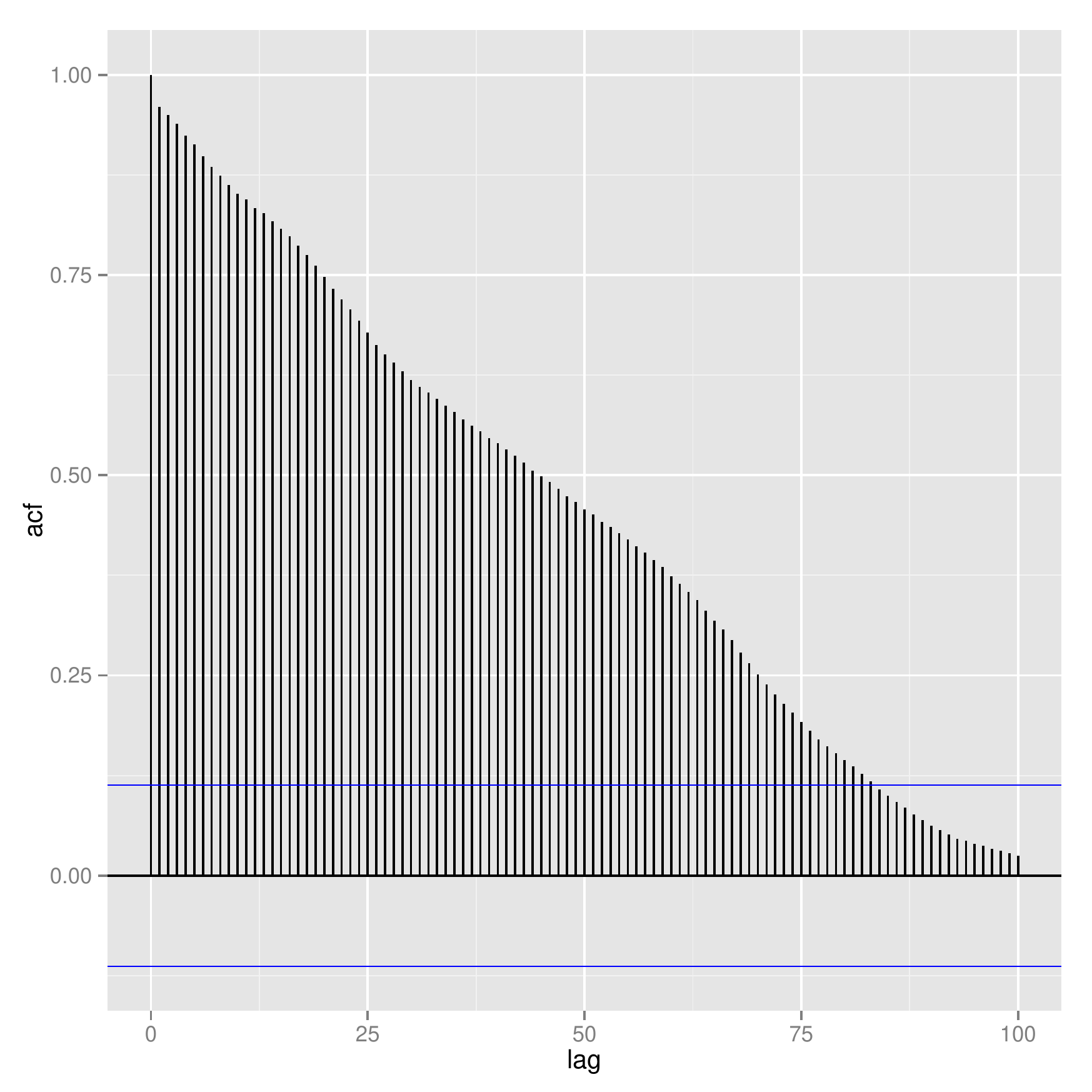}}\\
\subfloat[Trace plot]{\includegraphics[scale=0.25]{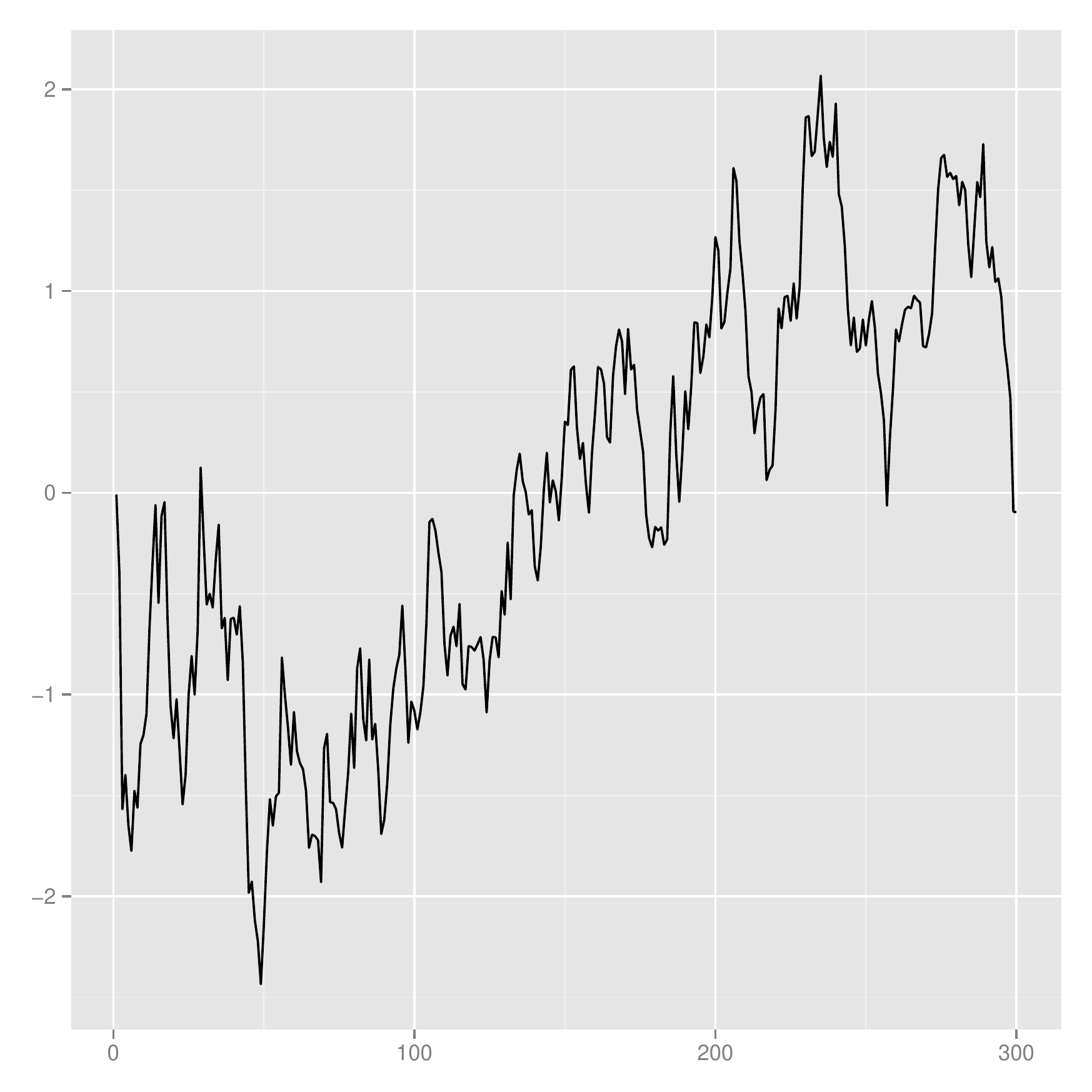}}&
\subfloat[ACF]{\includegraphics[scale=0.25]{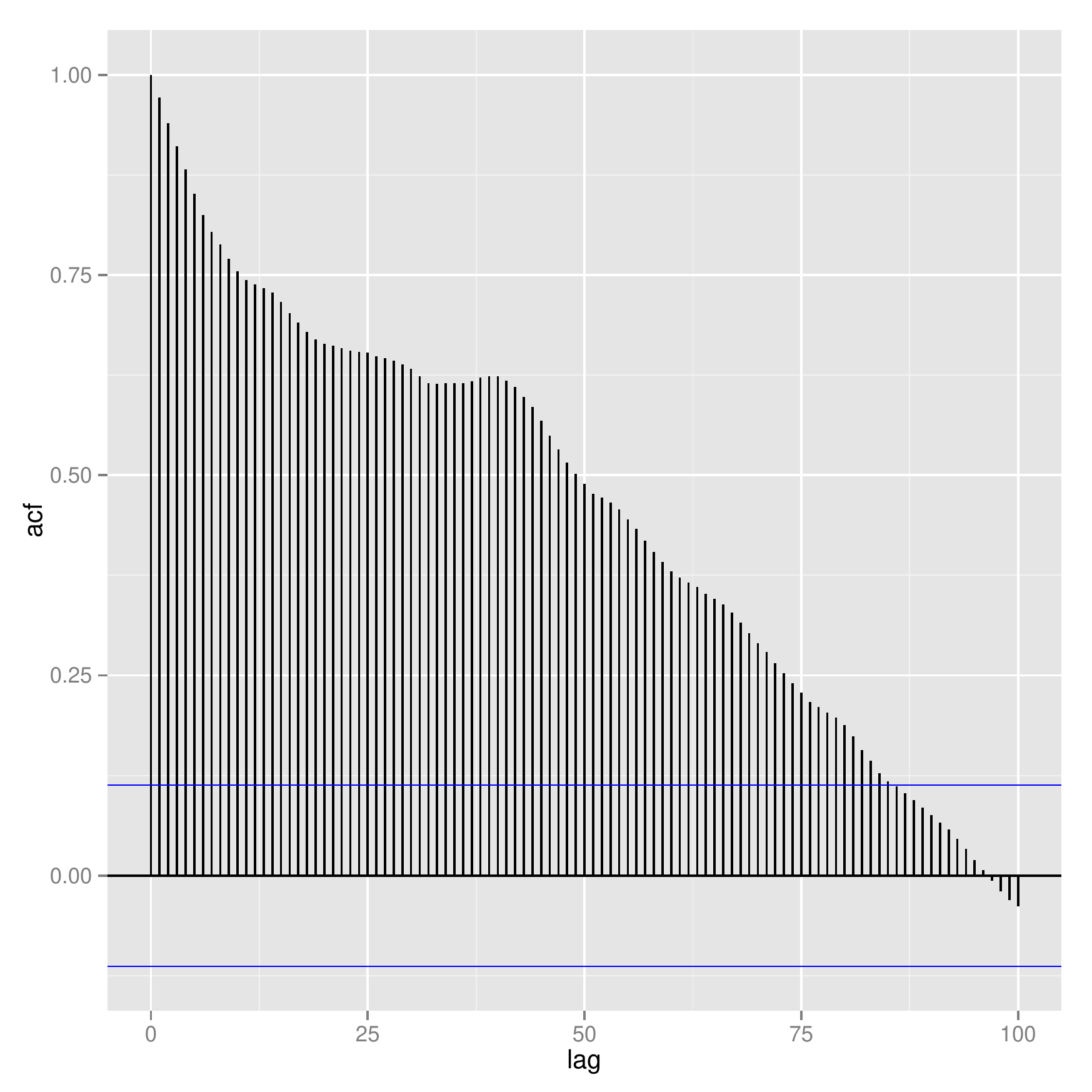}}\\
\end{tabular}
\caption{Truncated Student sampling after thinning one out of every 1000, using a Gibbs sampler}
\label{fig:mean}
\begin{minipage}{15cm}
\footnotesize{The data are generated as explained in the ``Numerical results section'', for dimension $50$. The left panels are  trace plots for two components. The right panels are the ACFs. Both are shown after a thinning of 1/1000. Both show a slow convergence, whereas we observe that SMC is stable.}
\vspace*{3mm}
\hrule
\end{minipage}
\end{center}
\end{figure}


\section{Numerical results}
\label{sec:NumAna}
\subsection{Covariance simulation, tunning parameters}
To build the covariance matrices we propose to use draws from a Cauchy distribution. We start by sampling a matrix $X$ and a vector $\textbf{a}$ from an independent Cauchy distribution $X_{ij}\sim\mathcal{C}(0,0.01) \text{  and }a_i\sim\mathcal{C}(0,0.01)$,
then construct the covariance matrix as $\Sigma=X^tX$ and the truncation as $\textbf{a}=(a_i)$. Because of the heavy tails the resulting correlation matrix (figure \ref{fig:heat}) has many close to zero entries and some high correlations. The truncations also have some very high levels (figure \ref{fig:trunc}).

We find that this approach leads to more challenging covariances than those built by sampling the spectrum of the covariance matrix as proposed for instance in \cite{christen2012optimal}.

\begin{figure}[H]
\begin{center}
\begin{tabular}{cc}
\subfloat[Correlations]{\includegraphics[scale=0.3]{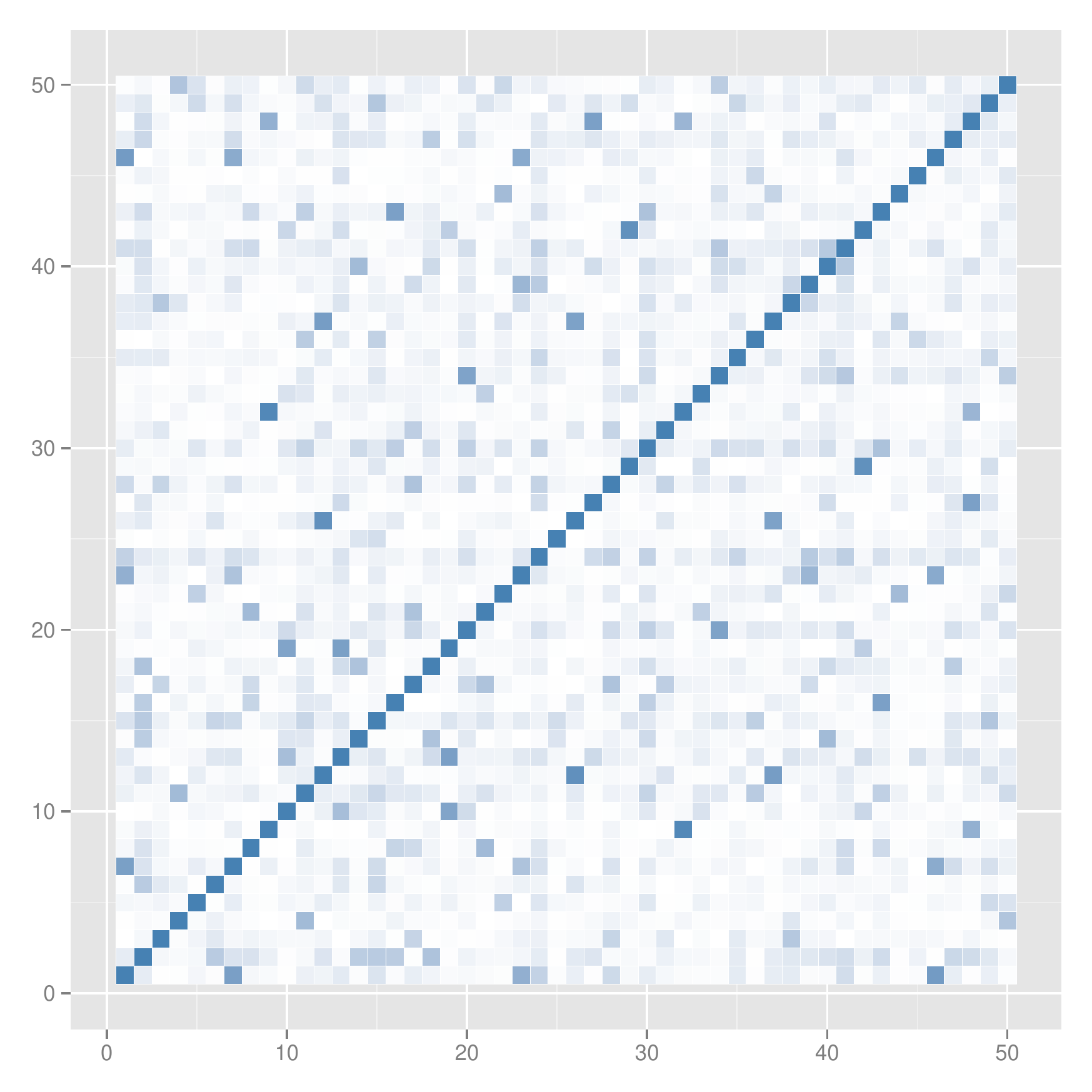}\label{fig:heat}} &
\subfloat[Truncation]{\includegraphics[scale=0.3]{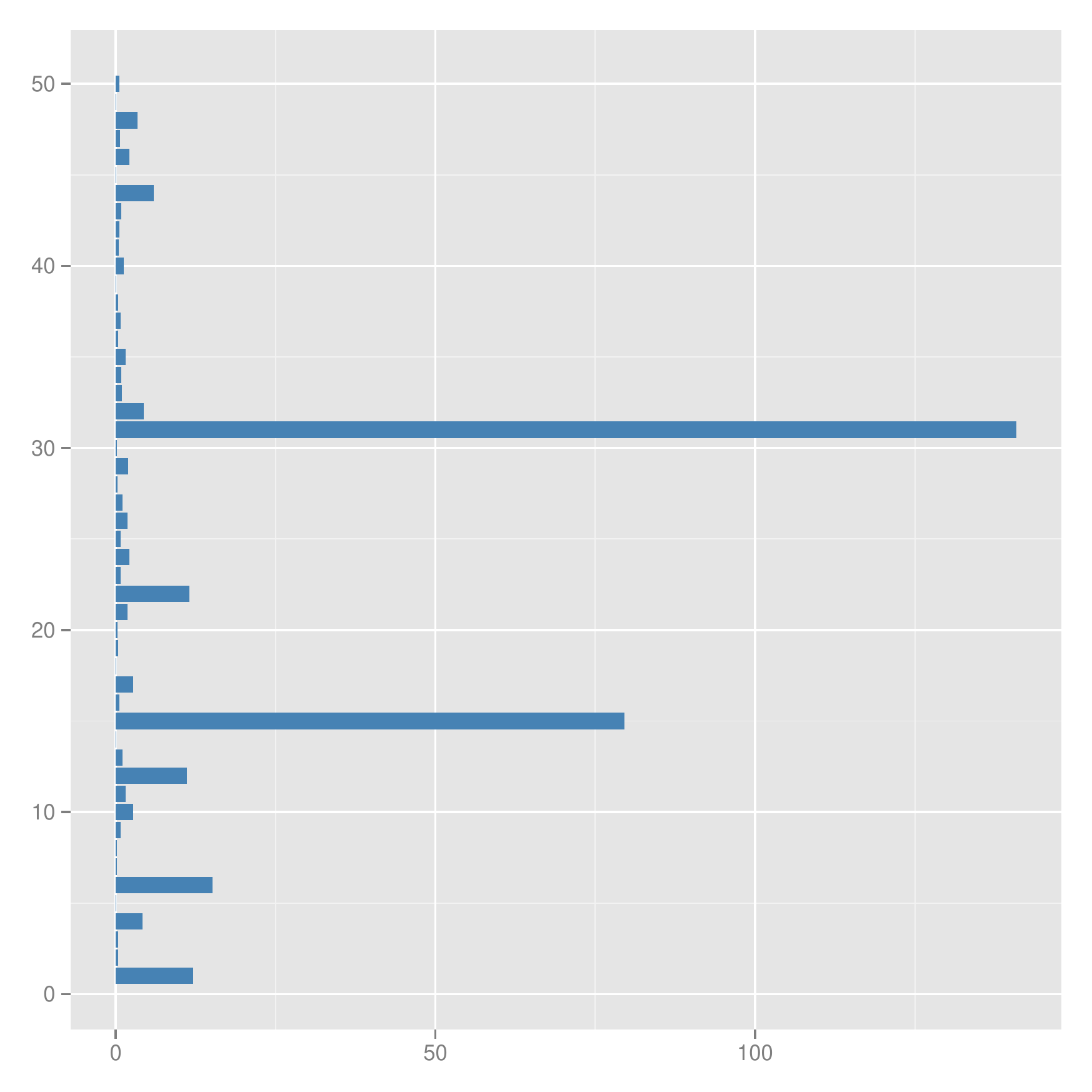} \label{fig:trunc}}
\end{tabular}
\caption{Generated correlations}
\label{fig:hmc}
\begin{minipage}{15cm}
\footnotesize{Covariance matrices generated from random samples with heavy tails. The various dimension are simulated with the same algorithm and same seed, such that the small ones are subsets of the others. The left panel shows a heatmap of the correlation, the right panel the left trucation of the integral.}
\vspace*{3mm}
\hrule
\end{minipage}
\end{center} 
\end{figure}

The tuning parameters of the algorithm are the threshold that tunes the number of steps of the MCMC kernel, and the targeted ESS under which we resample, $\alpha^\star$. The former is set to some small value (0.01) and has not much influence. The latter gives us a trade off between variance and computational cost. In our example we have found that $0.5M$ allows good approximation, however this value should be increased with the difficulty of the problem. 
\subsection{GHK for moderate dimensions}

All our results are shown at constant computational cost: we repeat the algorithm in a first time to get their execution time, and  we then scale them accordingly. In the above example (dimension 40) for instance the number of draws associated with the GHK algorithm is $1,065,399$. The number of particles of the SMC sampler with Gibbs Markov transition is $5217$.

\begin{figure}[H]
\begin{center}
\begin{tabular}{cc}
\subfloat[Dimension 20]{\includegraphics[scale=0.3]{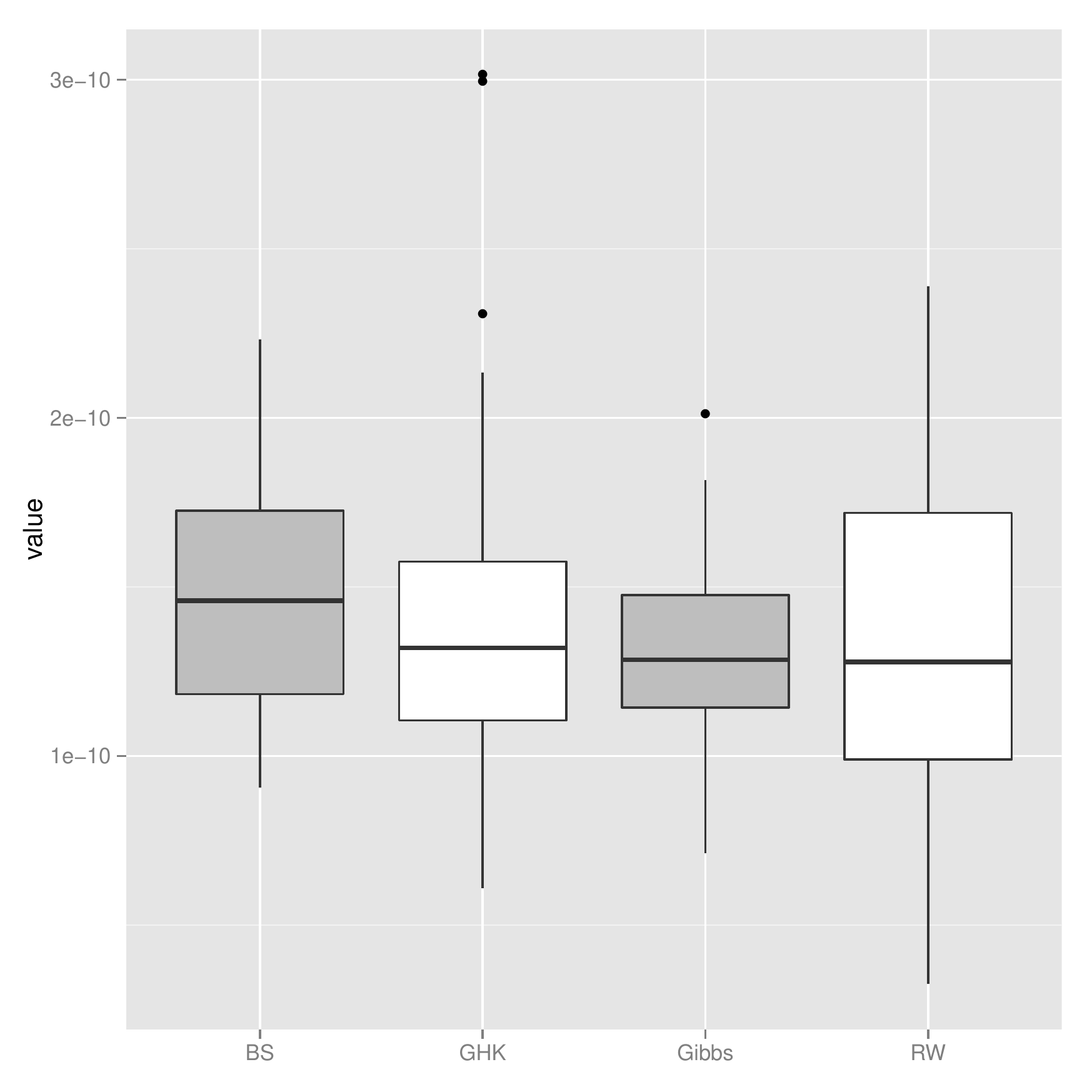}} &
\subfloat[Dimension 30]{\includegraphics[scale=0.3]{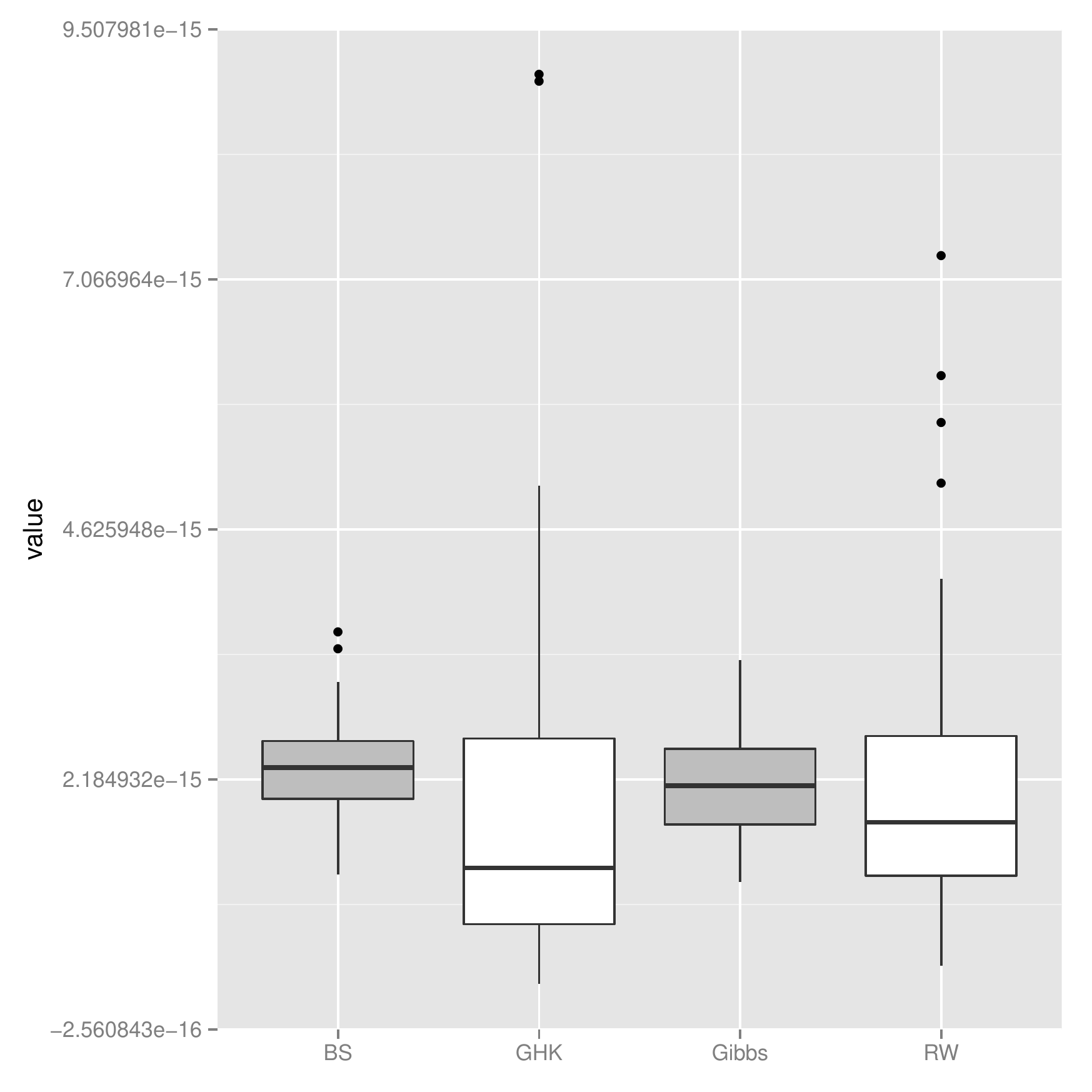} }\\
\subfloat[Dimension 40]{\includegraphics[scale=0.3]{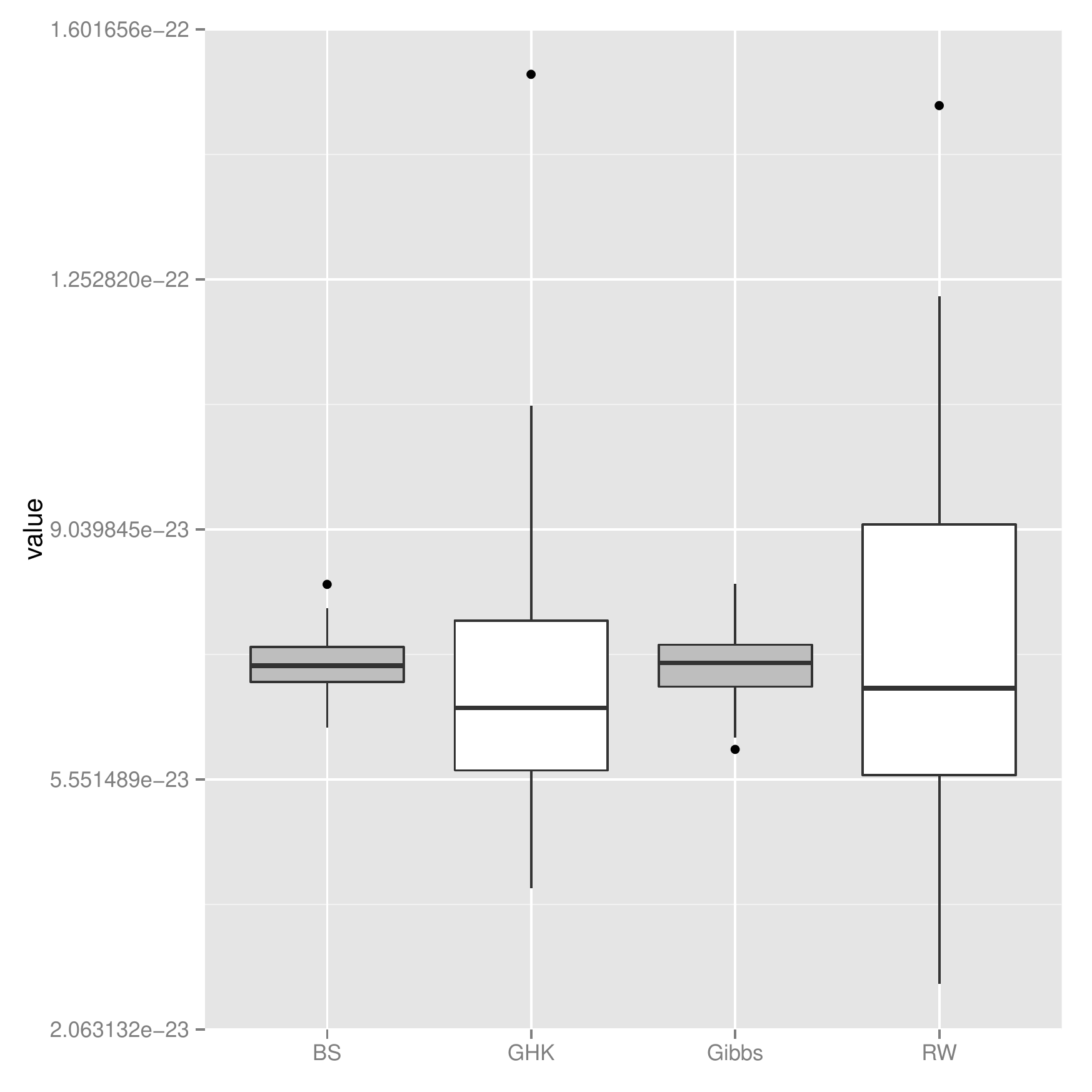} } &
\subfloat[Dimension 50]{\includegraphics[scale=0.3]{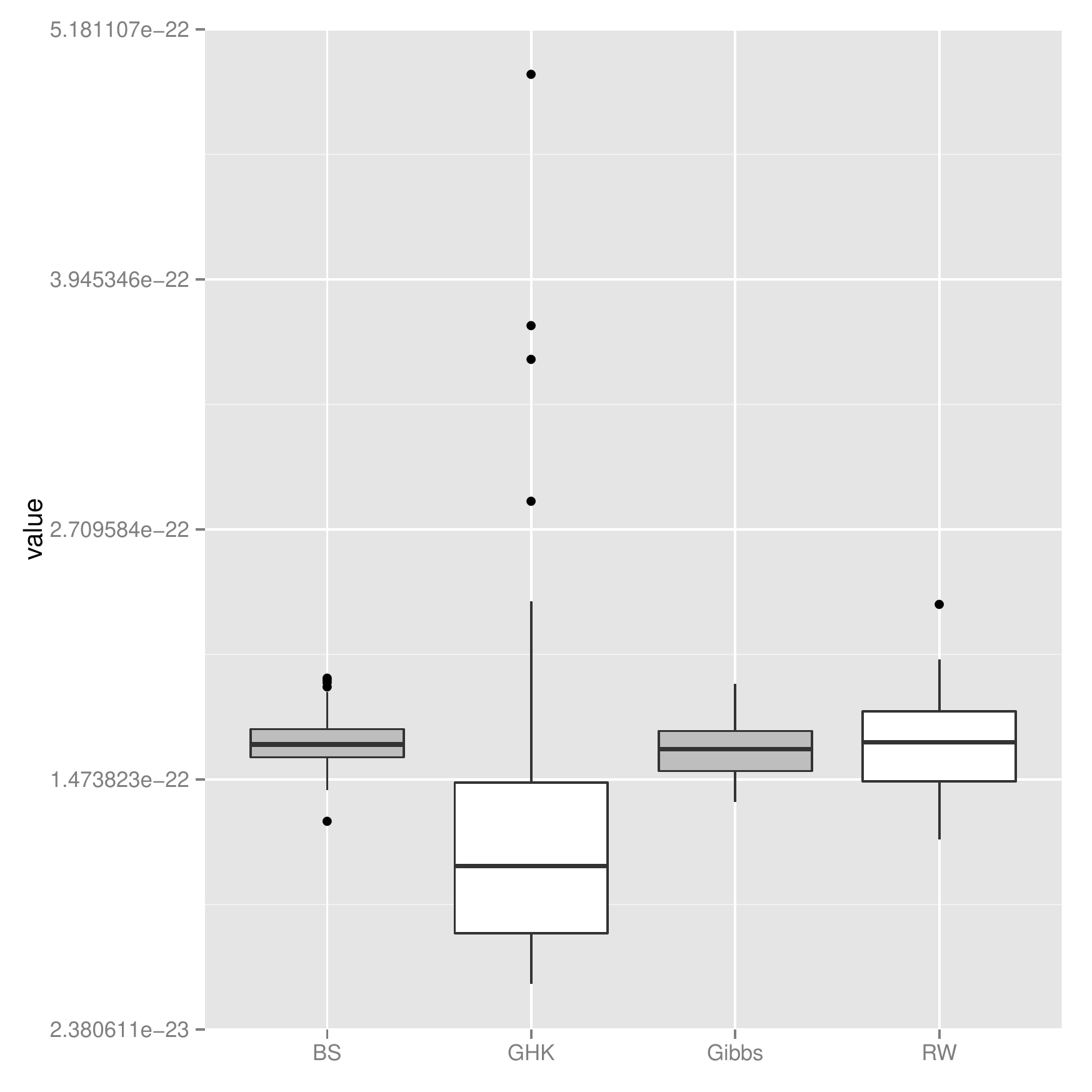} }
\end{tabular}
\caption{Estimates of Orthant probabilities for GHK compared to SMC with different moves.}
\label{fig:comp}
\begin{minipage}{15cm}
	\footnotesize{The various dimensions are simulated as described in Section \ref{sec:NumAna}. Different moves are tested inside SMC. As we have already discussed GHK leads to a higher variance and outliers. Block sampling (BS) and Gibbs sampling (Gibbs) seem to outperform the overrelaxed random walk (RW). However all three stay stable until dimension 50.}
\vspace*{3mm}
\hrule
\end{minipage}
\end{center}
\end{figure}

We find that in moderate dimension ($\sim 50$) the GHK simulator breaks down in attempting to compute the probability of the orthant generated by our simulation scheme. This is all the more problematic as it gives an answer, and there is no way of checking its departure from the true value.

Another interesting aspect is the fact that the block sampling algorithm performs well in those dimensions. It is quicker than to move the particles in every dimension as is done with MCMC. The truncations that lead to a drop of probability are close together because of the ordering, hence once the difficult dimensions have been ``absorbed'' it is less and less paramount to visit the past truncations.

\subsection{High dimension orthant probabilities}

In dimensions higher than $p=70$, the covariance we simulate lead to integrals that cannot be treated with the GHK algorithm. In our simulations GHK always returned NaN values due to the low values of the weights. For the SMC an indicator of the good behavior of the algorithm can be seen in either its reproducibility and the fact that we do not encounter asymmetry (see Remark \ref{rmk:lognorm}) as for the GHK in the first two Sections. Furthermore the ESS does not fall very low along the particles' draw (Figure \ref{fig:esshd}). 

\begin{figure}[H]
\begin{center}
\begin{tabular}{lll}
\subfloat[Dimension 130]{\includegraphics[scale=0.27]{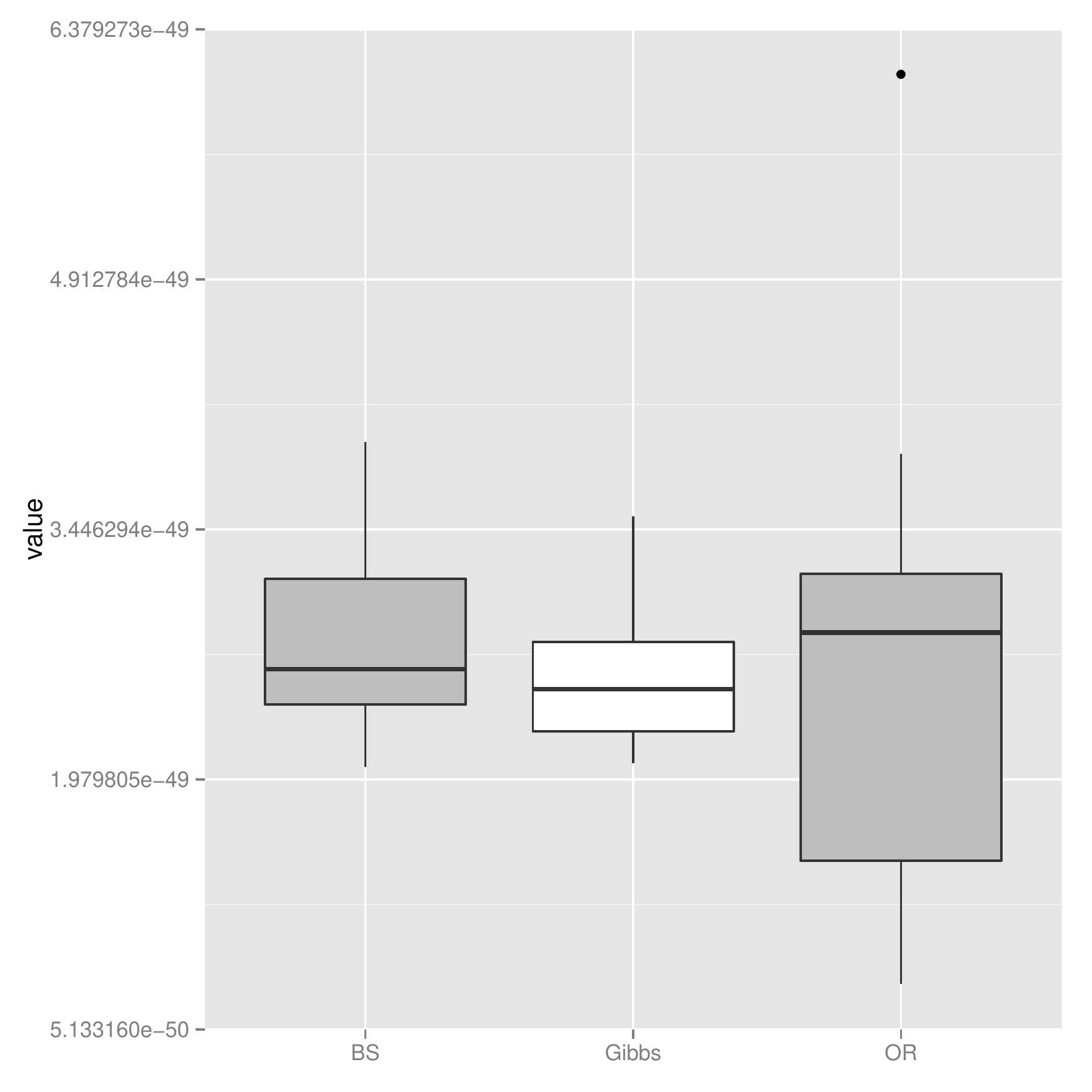}} &
\subfloat[Dimension 180]{\includegraphics[scale=0.27]{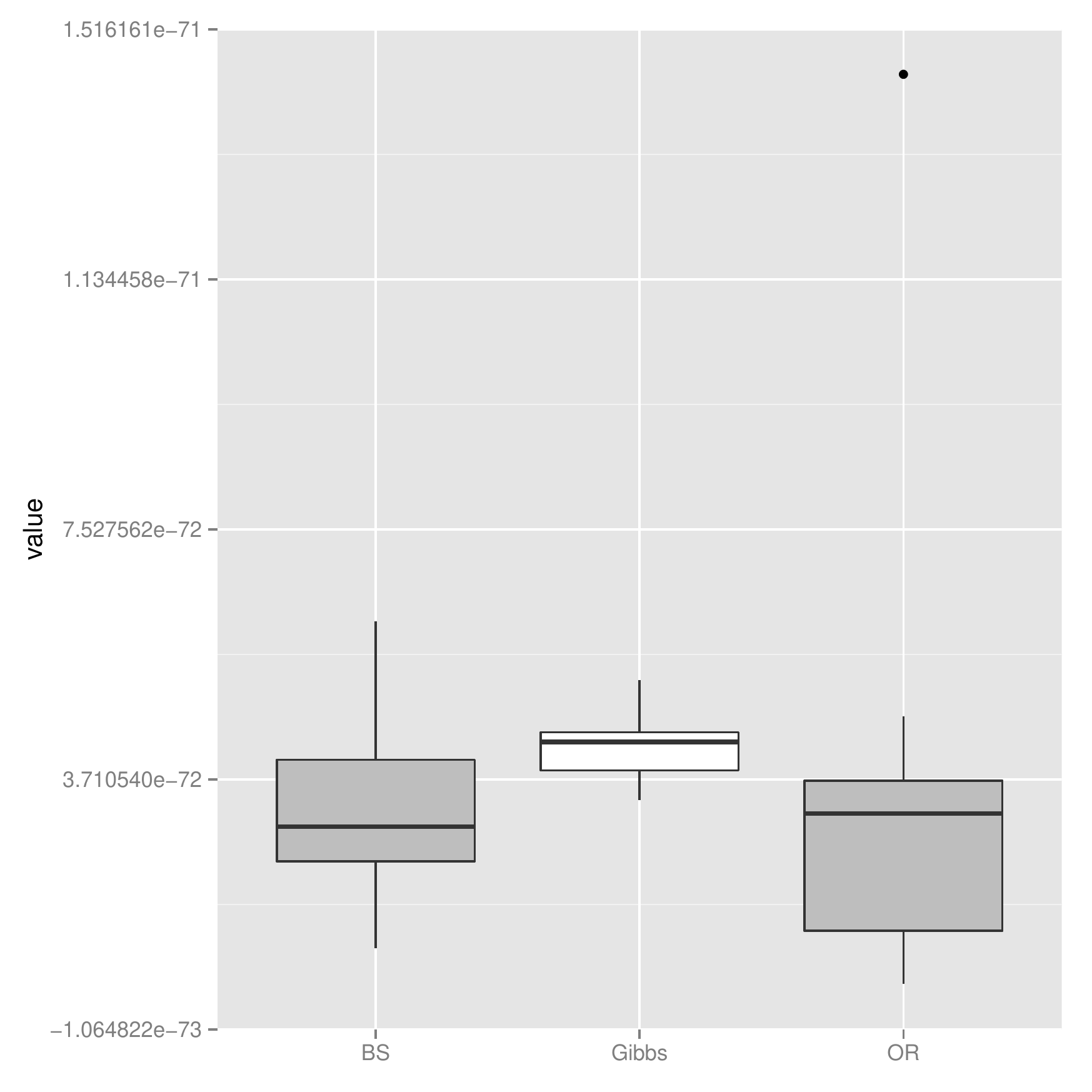}}
\subfloat[Dimension 130]{\includegraphics[scale=0.27]{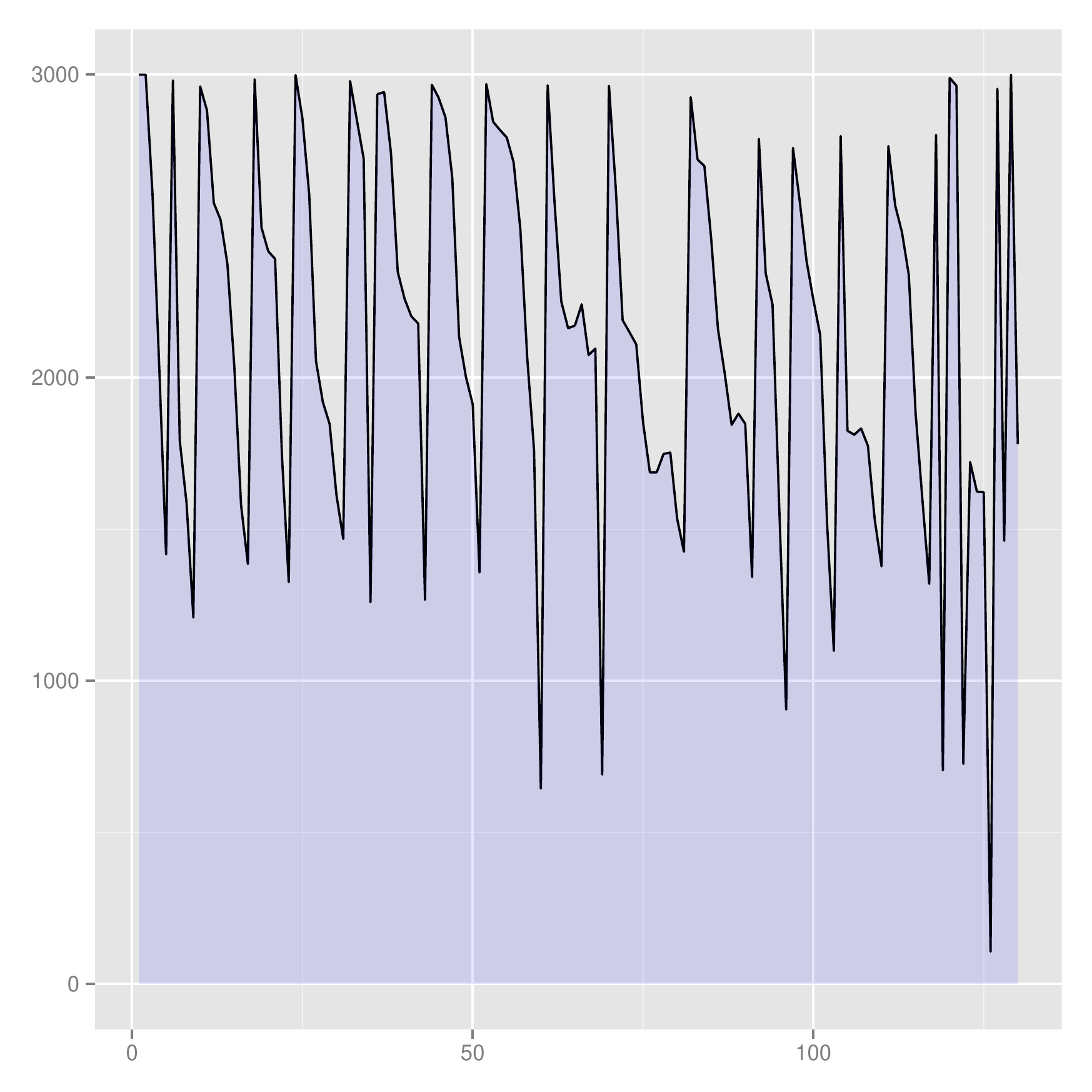}\label{fig:esshd}} 
\end{tabular}
\caption{Estimates of Orthant probabilities $p=130$ and $p=180$}
\label{fig:highdim}
\begin{minipage}{15cm}
	\footnotesize{The various dimensions are simulated as described in Section \ref{sec:NumAna}. Different moves are tested inside SMC. As we have already discussed GHK leads to a higher variance and outliers. Gibbs sampling (Gibbs) seem to outperform the overrelaxed random walk (OR) and Block sampling (BS). However all three stay stable until dimension 180. The ESS for the Gibbs sampler is shown in panel $c$ for a threshold of $0.5M$ and $M=3000$. Despite some sudden drops it seems to be stable.}
\vspace*{3mm}
\hrule
\end{minipage}
\end{center}
\end{figure}

For those dimensions the Gibbs sampler performs best in terms of variance. However if one's goal is a fast algorithm, at the cost of higher variance the overrelaxation might be preferable at some point as the dimension of the target increases. The latter as a complexity smaller of one degree such that at constant computational cost it will have more and more particles allocated to it.

\subsection{Student orthant probabilities}

We use the same schemes as before to construct the covariance matrix and fix a degree of freedom of $3$ in our experiments. As before we show an improvement as compared to previous algorithms. This improvements appears also for moderate dimensions. It seems that there is an important gain in considering the extended target. 

As for the Gaussian case we find that the output of GHK is heavily skewed. It seems that it is not the case for our algorithm.

\begin{figure}[H]
\begin{center}
\begin{tabular}{cc}
\subfloat[Dimension 30]{\includegraphics[scale=0.3]{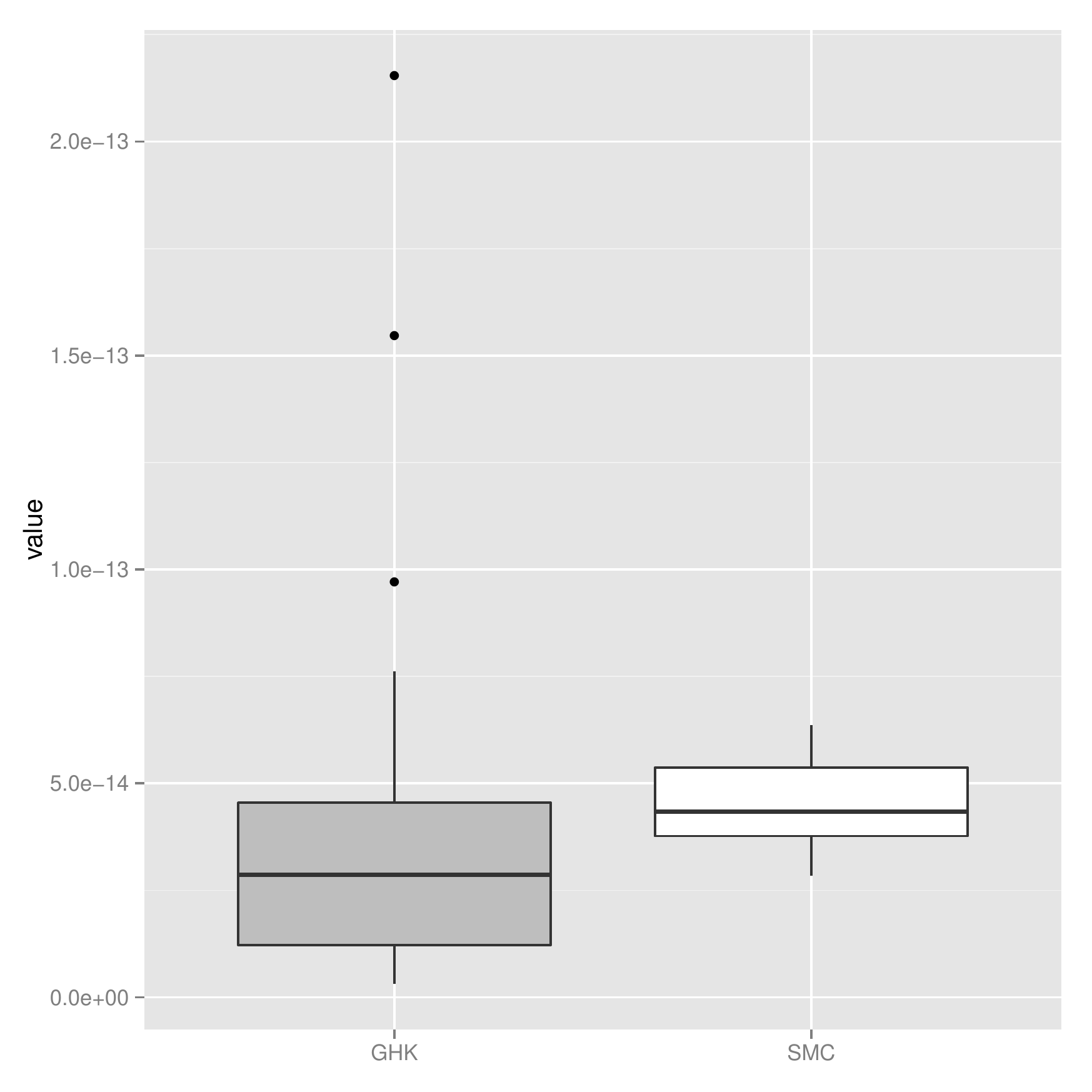}} &
\subfloat[Dimension 50]{\includegraphics[scale=0.3]{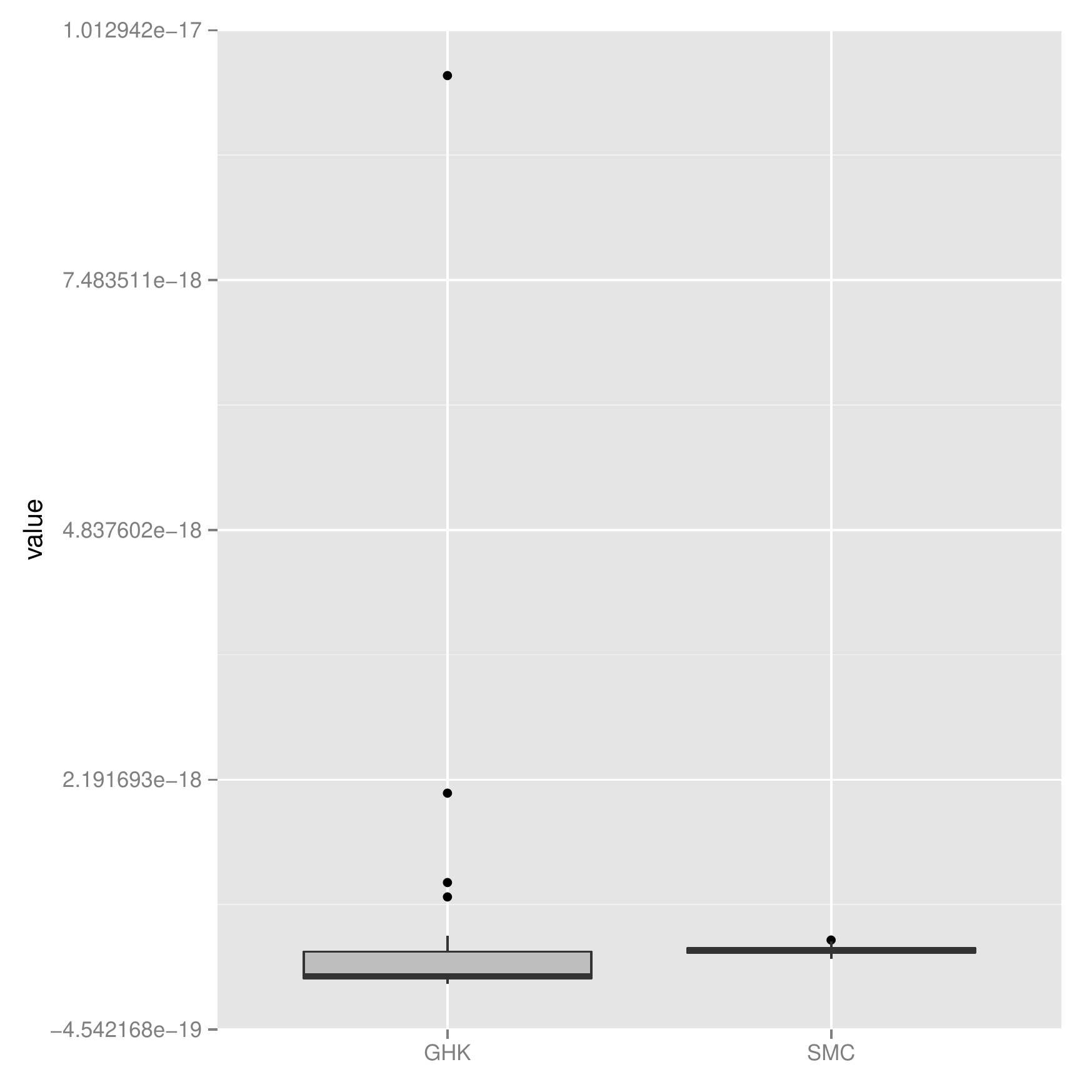} }
\end{tabular}
\caption{Estimates of Student orthant probabilities SMC vs GHK}
\label{fig:Student}
\begin{minipage}{15cm}
\footnotesize{Covariance matrices generated from random samples with heavy tails. We find that the SMC outperforms the GHK. As for the previous cases the GHK leads to some outliers.}
\vspace*{3mm}
\hrule
\end{minipage}
\end{center}
\end{figure}
\subsection{Application to random utility models}
Random utility models are an important area of research in Economics to model choice data \citep{Train2009}. Consider an agent $i$ confronted to $J$ alternatives each giving utility $Y_{ij}^\star$ $\forall j \in \lbrace 1,\cdots, J\rbrace$ modeled by $Y^\star_{ij}=X_i\beta+u_{ij}$ with $u_{ij}$ a Gaussian noise. Individual $i$ chooses alternative $j$ if $\lbrace \forall k\neq j \quad Y_{ij}^\star>Y^\star_{ik}\rbrace$. The likelihood is the probability of this set integrated over the unobserved alternatives.
Hence the likelihood is given by:
\begin{align*}
L(Y_i=j\vert \Omega, \beta, X)&=\mathbb{P}\left(\bigcap_{k\neq j}\lbrace Y_{ij}^\star>Y^\star_{ik}\rbrace\rbrace\right)\\
&= \mathbb{P}\left(\bigcap_{k\neq j}\lbrace (X_{ij}-X_{ik})\beta^\star>u_{ik}-u_{ij}\rbrace\rbrace\right)    
\end{align*}
where integration is taken over $u\sim \mathcal{N}(0,\Omega)$, where $u=(u_{ij})$. The above integral is an orthant probability of dimension $J-1$. A yet more challenging case occurs in the presence of panel data. The latter corresponds to sequential choices of an individual in time. We denote those choices by  the subscript $t$. We observe $(j_t)_{t<T}$ for every individual. Integration is now in dimension $T(J-1)$ and takes the form:
\[
L(Y_{i,1:T}=j_{1:T}\vert \Omega, \beta,X)= \mathbb{P}\left(\bigcap_{t=1}^T\bigcap_{k\neq j_t}\lbrace (X_{ij_t t}-X_{ikt})\beta^\star>u_{ikt}-u_{ij_t t}\rbrace\rbrace\right)    
\]

We take the covariance structure studied in \cite{Bursh-Supan1992}. The noise term is $u_{itk}=\alpha_{ik}+\eta_{ikt}$ where $\eta_{ikt}=\varrho_i\eta_{ikt-1}+\nu_{it}$, where $(\alpha_{ik})$ are correlated amongst choices, so are $\nu_{it}$. The terms are all Gaussian.

The dataset is simulated to allow for examples that are more complex, and of variable size. In the model presented above individuals are independent so that we present results in computing the integral for $n=1$, and have already a big advantage of using our methodology. 

\begin{figure}[H]
\begin{center}
\begin{tabular}{cc}
\subfloat[$J=10$, $T=10$]{\includegraphics[scale=0.3]{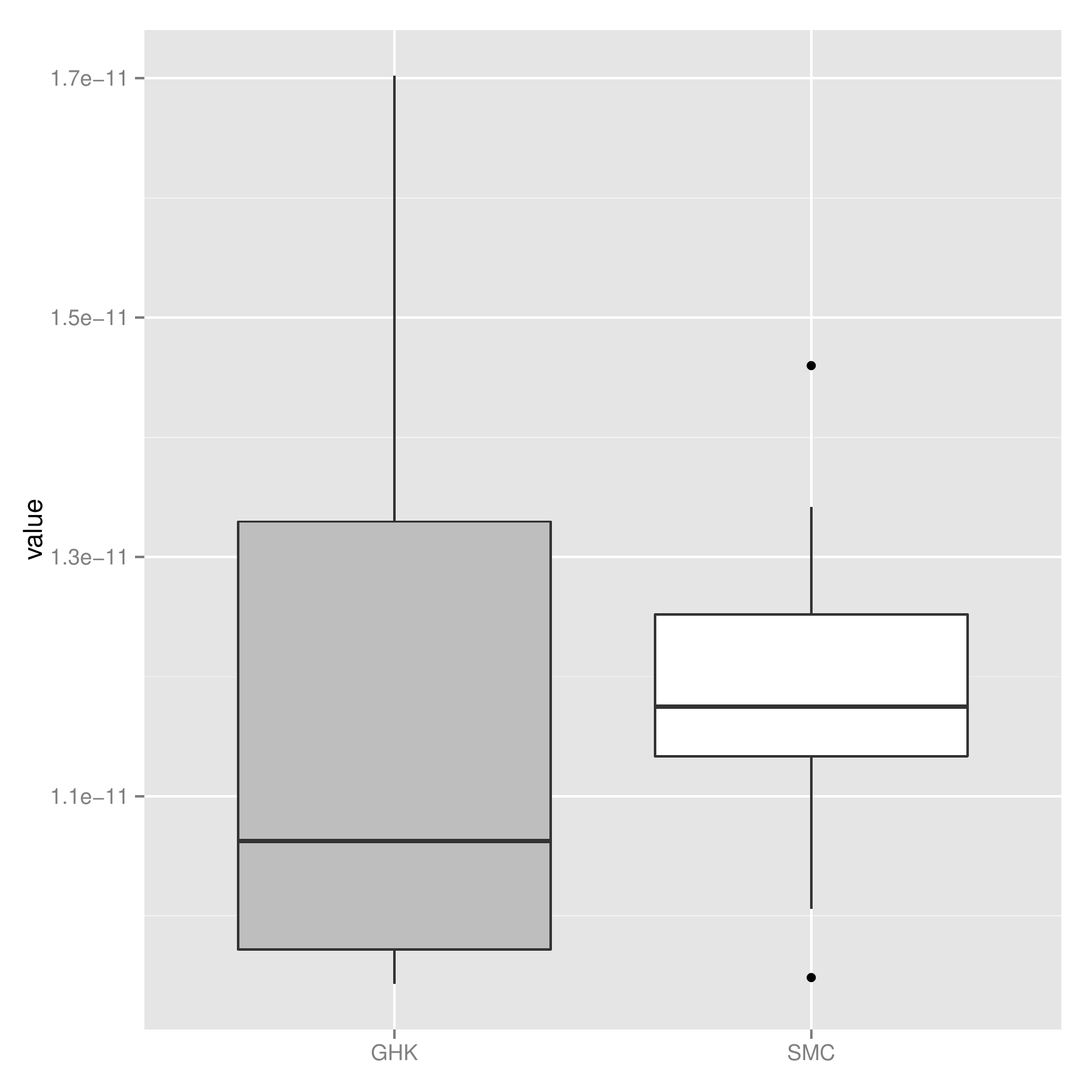}} &
\subfloat[$J=10$, $T=15$]{\includegraphics[scale=0.3]{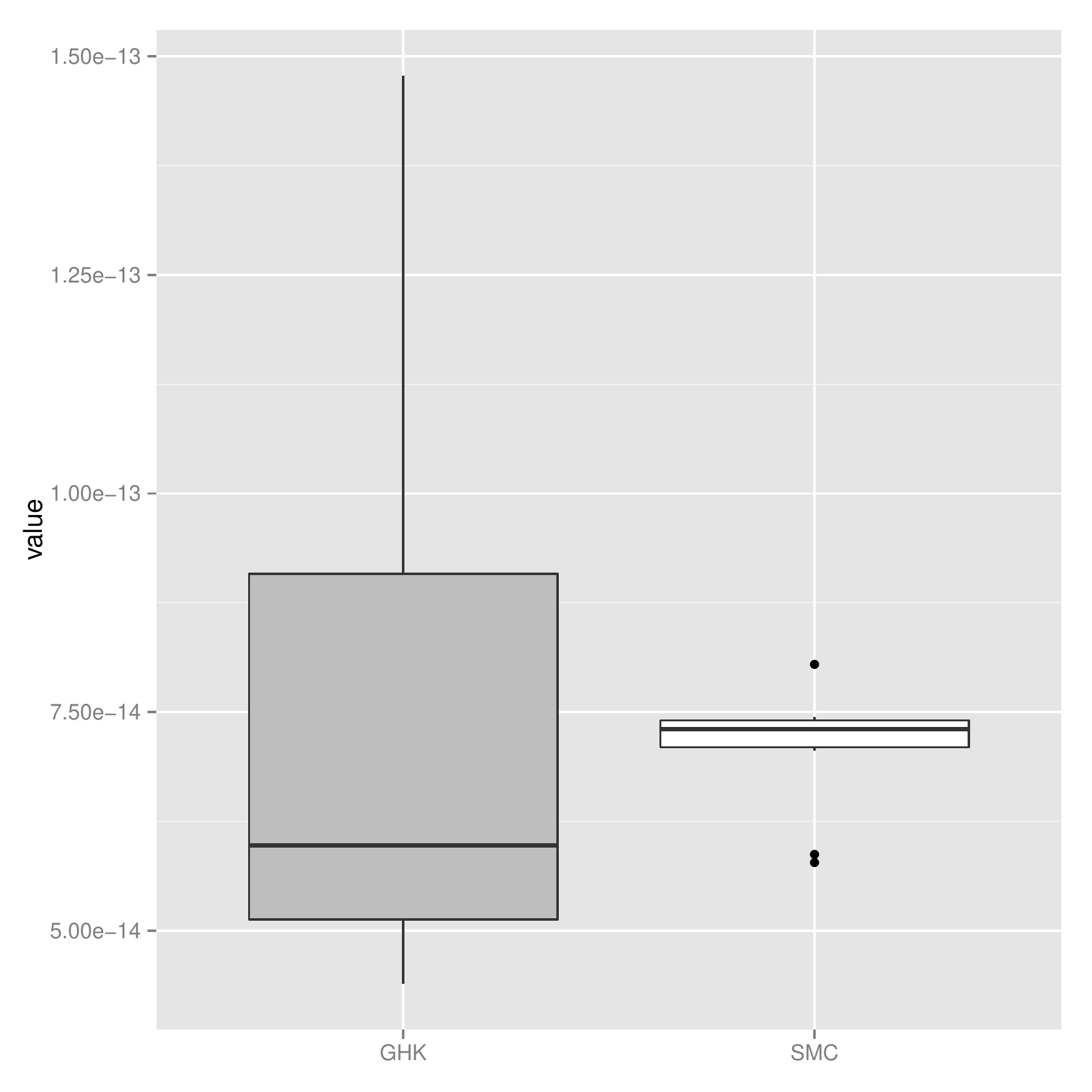}} 
\end{tabular}
\caption{Estimates of the likelihood of a multivariate Probit}
\label{fig:Probit}
\begin{minipage}{15cm}
\footnotesize{Dataset: The Data is simulated using the covariance proposed in \cite{Bursh-Supan1992}, the value of the parameter for which the likelihood is evaluated is taken at random.}
\vspace*{3mm}
\hrule
\end{minipage}
\end{center}
\end{figure}

Figure (\ref{fig:Probit}) shows, for two problems of different size, the gain in precision at constant computational cost. The improvement is substantial and increases with the size of the problem. Taking $n>1$ would only increase this effect as it would consist in taking products of such estimators. 

The latter result suggests that we could use the likelihood for inference using either maximum likelihood or PMCMC. Computing the likelihood with lowest possible variance is also a key issue in finding the evidence in the most precise manner possible.
 
When comparing the two algorithms we set the number of particles of SMC to $M=1000$, for the same computational cost we allocate $M=881031$ to GHK. SMC has however still a lower variance. Regardless of computational time, the ability to compute precise integrals for a small number of particles can also be of importance. It is the case for instance in SMC$^2$ \citep{Chopin2013a} where whole trajectories have to be kept in memory for several samplers.

One could extend these models to multivariate Probit with Student distribution and to multivariate logit models for more robustness. In this case we can use the algorithm based on the mixture representation built in Section \ref{sec:Student}.

\section{Conclusion}
\label{sec:ccl}
We have shown empirically  that the GHK algorithm collapses when the dimension of the problem increases (returning NaN values). In other cases, the distribution of estimates generated by GHK may have heavy tails (see also Remark \ref{rmk:lognorm}). Theoretically for at least one covariance structure we have shown that the variance of the algorithm diverges exponentially fast with the dimension (Section \ref{sec:Markovian}). Our SMC algorithm seems to correct this behavior, and was found to be of practical use for many problems. 

We have tested several kernels as part of the move step (Section \ref{sec:smc}). We advise the practitioners to use Gibbs sampling as the standard ``go to'' move step. However improvements in speed can be achieved for dimensions around 50 using only a partial update. In addition as the dimension increases one might want to use a method with lower complexity at the cost of having to repeat the move a bit more. In this case we recommend the use of an overrelaxed random walk Metropolis-Hastings. 

We have shown that the same idea can be use for computing probabilities of mixtures of Gaussians. In addition we can use the weighted particles returned by the algorithm to compute other integrals (mean, variance, .etc). This approach can outperform a classical Gibbs sampler when the dimension exceeds $20$.

\section*{Acknowledgment} 

This work is part of my PhD thesis under the supervision of Nicolas Chopin. I am grateful for his comments.

\bibliography{biball}
\appendix
\section{Proof of proposition 2.1}
\label{app:proof}
\begin{proof}
	We have that for $0<b<\infty$ the transition density $p^b(x,y)dy$, associated with the kernel $P^b(x,\dd y)$ with respect to the Lebesgue measure, is lower bounded by a constant and the transition is continuous. This Markov chain is a $\psi$-irreducible on a compact support. Hence we can show that the whole support $\left[0,b\right]$ is small \citep{Meyn2009}.

Hence by theorem 16.1.2 to show V-Uniform ergodicity the transition must satisfy the drift condition:
$$
\int p(x,y)V_e(y)dy\leq (1-\beta)V_e(x)+c\ind_{\left[0,b \right]}(x)
$$
for $\beta>0$, $c<\infty$ and a certain $V_e(x)$ with value in $\left[1,\infty\right)$. We take $V_e(x)=ex^2+1$, $e>0$. In the following we check this condition.

The left hand side is given by $\mathbb{E}(X_t^2\vert X_{t-1}=x)$ for the above transition probability,
$$
\mathbb{E}(X_t^2\vert X_{t-1}=x)=\varrho^2 x^2+1+\frac{\varrho\varphi(\varrho x)-b \varphi(b-\varrho x)}{\Phi(b-\varrho x)-\Phi(-\varrho x)}\varrho x 
$$
The ratio is continuous on the bounded set $\left[0,b\right]$, and can be bounded by a constant, such that the by taking $\beta=1-\varrho^2>0$ the drift condition is satisfied for a $c(e)$ depending on $e$.

In addition we can compute exactly the invariant measure. It is unique and given by the solution of:
$$
\pi(y)=\int \frac{\varphi(y;\varrho x,1)}{\Phi(a-\varrho x)-\Phi(-\varrho x)}\ind_{\left[0,b\right]}(y)\pi(x)dx
$$
The solution of the above equation is a truncated skew-Normal distribution,
$$
\pi(dy)\propto\left\lbrace\Phi(b-\varrho y)-\Phi(-\varrho y)\right\rbrace\varphi(y;0,1-\varrho^2)\ind_{\left[0,b\right]}(y)dy.
$$ 
The moments of this distribution have been studied  in \cite{Flecher2009}, in particular note that $0<\mathbb{V}_{\pi}(X)<\infty$.

Define $\psi:x\mapsto\log\left[\Phi(b-\varrho x)-\Phi(-\varrho x)\right]$, by theorem 17.0.1 \citep{Meyn2009} to obtain a CLT for $\frac1{\sqrt{T}}\sum_{t=1}^T\psi(X_t)$ we must ensure that there exist a constant $e>0$ such that $\psi^2(x)<V_e(x)$ on $\left[0,b\right]$. Such a constant can be found by noting that $\psi(x)$ is bounded as long as $b>0$ and that $V_e$ is strictly increasing of $e>0$ with value on $(0,\infty)$. The value of $e$ depends on $b$. We obtain the following convergence result,
$$
\frac1{\sqrt{T}}\left(\sum_{t=1}^T\psi(X_t)-T\mathbb{E}_\pi(\psi(X))\right)\leadsto \mathcal{N}\left(0,\mathbb{V}_\pi\left\lbrace\psi(X)\right\rbrace+\tau\right),
$$
where the variance term is defined because $\psi$ is bounded on $\left[0,b\right]$, and $\tau=2\sum_{k=1}^\infty \text{cov}(X_0,X_k)$.
By taking the exponential and using the continuous mapping theorem (p.7 \cite{VanderVaart1998}) we get a log-normal limiting distribution 
$$
\left(\frac{\prod_{t=1}^T\left(\Phi(b-\varrho X_t)-\Phi(-\varrho X_t)\right)}{\exp\lbrace T\mathbb{E}_\pi\log\left(\Phi(b-\varrho X)-\Phi(-\varrho X)\right)\rbrace}\right)^{\frac1{\sqrt{T}}}\leadsto \mathcal{EN}\left(0,\mathbb{V}_\pi\left\lbrace\psi(X)\right\rbrace+\tau\right).
$$

By Portmanteau's Lemma (p.6 \cite{VanderVaart1998}) for $x^2$ as a continuous and positive function,
\begin{align*}
&\liminf_{T\rightarrow\infty}\mathbb{E}\left[\left(\frac{\prod_{t=1}^T\left(\Phi(b-\varrho X_t)-\Phi(-\varrho X_t)\right)^2}{\exp\lbrace2T\mathbb{E}_\pi\log\left(\Phi(b-\varrho X)-\Phi(-\varrho X)\right)\rbrace}\right)^{\frac1{\sqrt{T}}}\right]>
\exp\left\lbrace\mathbb{V}_\pi\left[\psi(X)\right]+\tau\right\rbrace\\
&\liminf_{T\rightarrow\infty}\mathbb{E}\left[\left(\frac{\prod_{t=1}^T\left(\Phi(b-\varrho X_t)-\Phi(-\varrho X_t)\right)^2}{\exp\lbrace2T\mathbb{E}_\pi\log\left(\Phi(b-\varrho X)-\Phi(-\varrho X)\right)\rbrace}\right)\right]^{\frac1{\sqrt{T}}}>\exp\left\lbrace\mathbb{V}_\pi\left[\psi(X)\right]+\tau\right\rbrace
\end{align*}
where the last line is obtained by Jensen inequality.
The denominator is the square of limit value of the normalizing constant under $x^2$-Uniform ergodicity that follows from the above statement.
\end{proof}
\section{Resampling}
\label{app:resampling}
 \begin{algorithm}[H]
 \caption{Systematic resampling,  $n$ particles}
\begin{algorithmic}
\STATE \underline{Input}: Vector of weights $w$ and vector $x$ to sample from
\STATE \underline{Set} $v\leftarrow n w$, $j\leftarrow 1$, $c=v_1$
\STATE \underline{Sample}: Sample $U\sim \mathcal{U}_{[0,1]}$ 
\FOR{$k=1,\cdots, n$}
\WHILE{$c<u$}
\STATE \underline{Set} $j\leftarrow j+1$, $c\leftarrow c+v_j$
\ENDWHILE
\STATE \underline{Set}: $\hat{x}_k\leftarrow x_j$, $u\leftarrow u+1$
\ENDFOR
\RETURN $\hat{x}$
 \end{algorithmic}
\label{alg:Resample}
\end{algorithm}

\section{Variable Ordering}
\label{app:VO}
\begin{algorithm}[H]
\caption{Variable Ordering}
\begin{algorithmic}
\STATE INIT:  $i_1=\arg \min_{1\leq k\leq T} \Phi\left(\left[\frac{a_k}{\gamma_{kk}},\frac{b_k}{\gamma_{kk}}\right]\right)$
\STATE $\eta_1=\frac1{\Phi\left(\left[\frac{a_{i_1}}{\gamma_{i_1 i_1}},\frac{b_{i_1}}{\gamma_{i_1 i_1}}\right]\right)}\int_{\left[\frac{a_{i_1}}{\gamma_{i_1 i_1}},\frac{b_{i_1}}{\gamma_{i_1 i_1}}\right]} \eta\varphi(\eta)d\eta$
\FOR{$i\in \lbrace2,\cdots,d\rbrace$}
\STATE STEP 1 $i_j=\arg \min_{j\leq k\leq T} \Phi\left(\left[\frac1{\tilde{\gamma}_{kk}}\left\lbrace a_k-\sum_{l=1}^{j-1}\tilde{\gamma}_{i_l k}\eta_k\right\rbrace,\frac1{\tilde{\gamma}_{kk}}\left\lbrace b_k-\sum_{l=1}^{j-1}\tilde{\gamma}_{i_l k}\eta_k\right\rbrace\right]\right)$
\STATE STEP 2 $\eta_j=\frac1{\Phi\left(\left[\frac1{\tilde{\gamma}_{kk}}\left\lbrace a_k-\sum_{l=1}^{j-1}\tilde{\gamma}_{i_l k}\eta_k\right\rbrace,\frac1{\tilde{\gamma}_{kk}}\left\lbrace b_k-\sum_{l=1}^{j-1}\tilde{\gamma}_{i_l k}\eta_k\right\rbrace\right]\right)}\int_{ \left[\frac1{\tilde{\gamma}_{kk}}\left\lbrace a_k-\sum_{l=1}^{j-1}\tilde{\gamma}_{i_l k}\eta_k\right\rbrace,\frac1{\tilde{\gamma}_{kk}}\left\lbrace b_k-\sum_{l=1}^{j-1}\tilde{\gamma}_{i_l k}\eta_k\right\rbrace\right]} \eta\varphi(\eta)d\eta$
\ENDFOR
\RETURN  $(i_1,\cdots,i_d)$ 
\end{algorithmic}
\label{alg:VO}
\end{algorithm}
Where $\tilde{\gamma}$ is updated accordingly when the order is changed.

\section{Hamiltonian Monte Carlo}
\label{app:HMC}
\begin{figure}[H]
\begin{center}
\begin{tabular}{cc}
\subfloat[Dimension 30]{\includegraphics[scale=0.3]{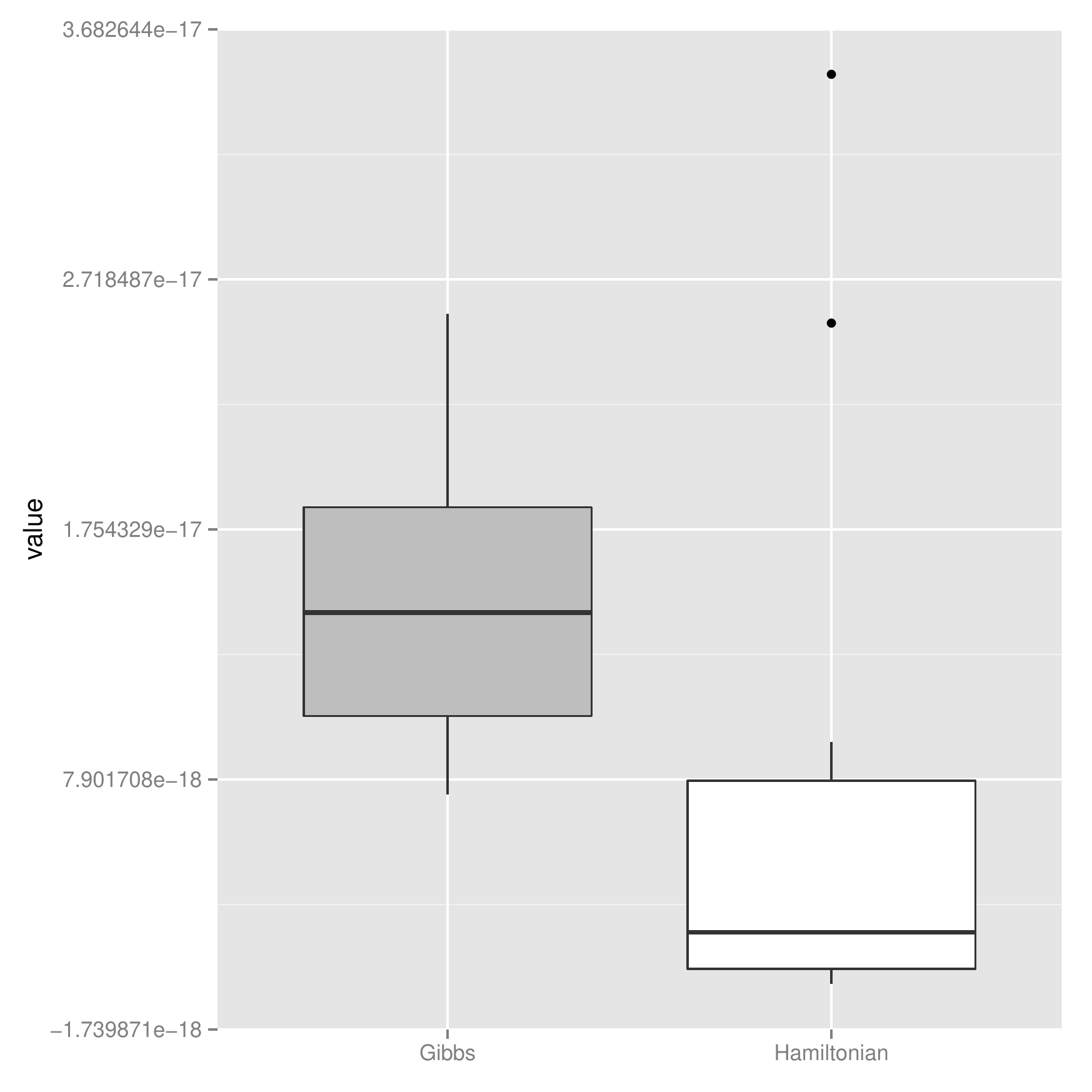}} &
\subfloat[Dimension 40]{\includegraphics[scale=0.3]{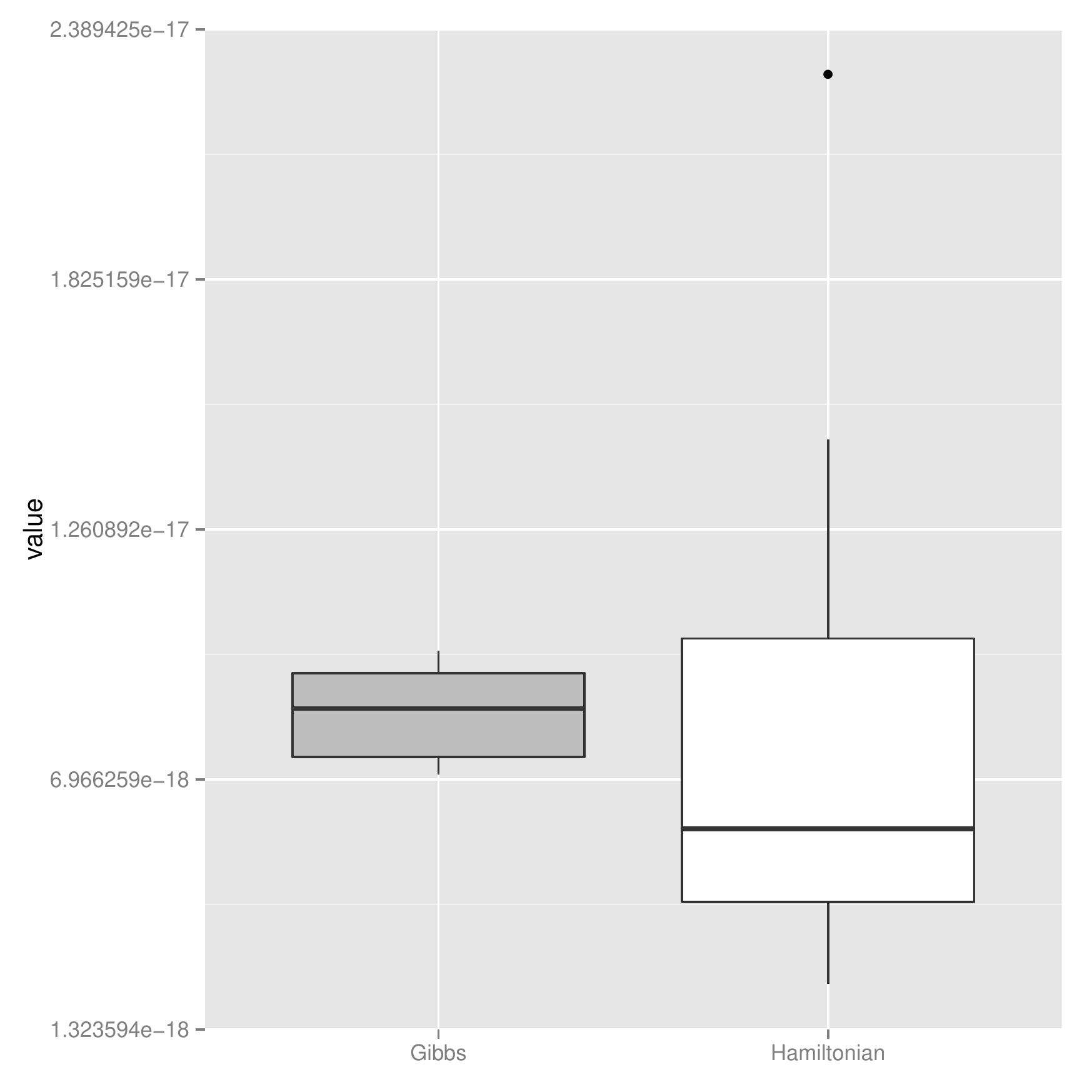} }\\
\subfloat[Dimension 50]{\includegraphics[scale=0.3]{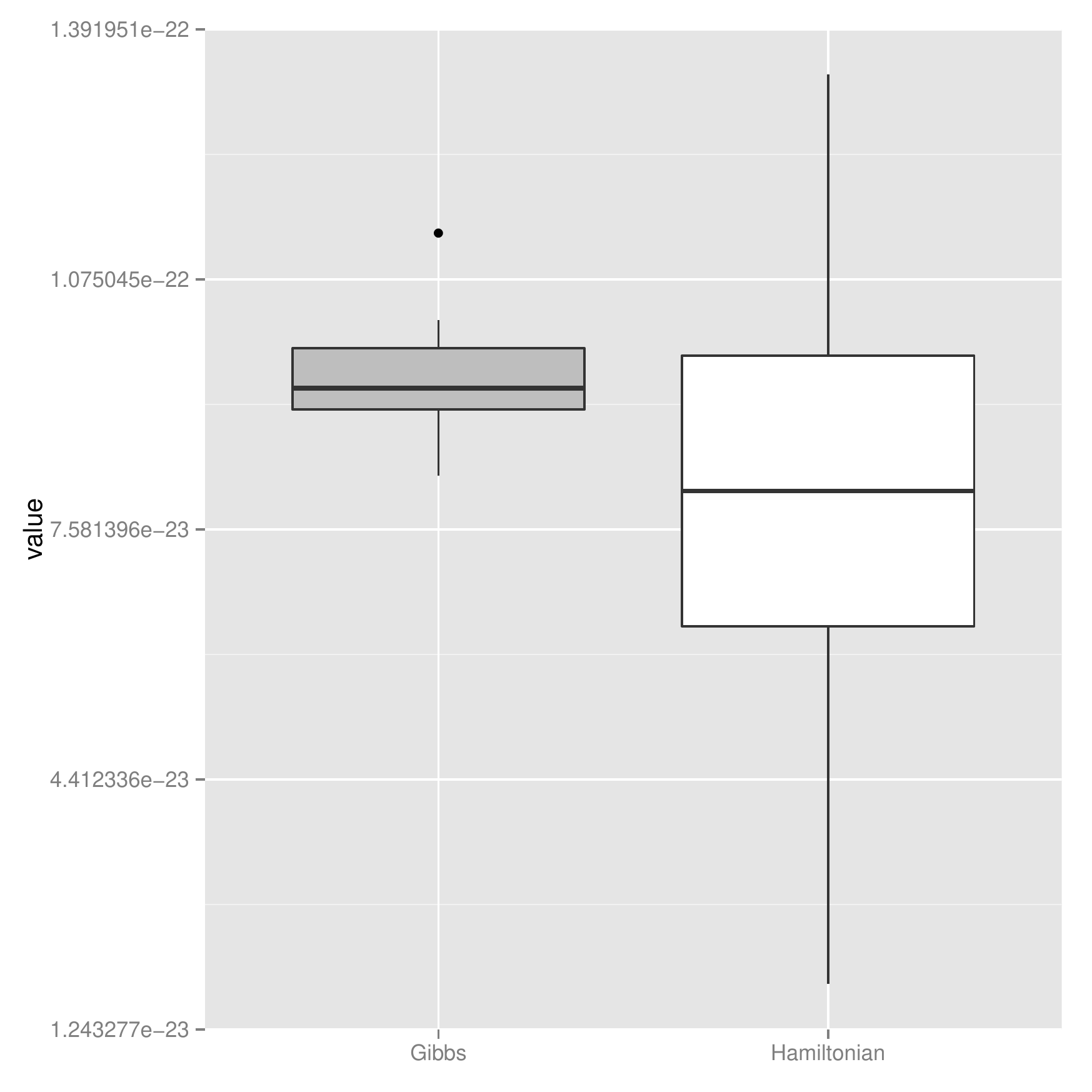} } &
\subfloat[Dimension 60]{\includegraphics[scale=0.3]{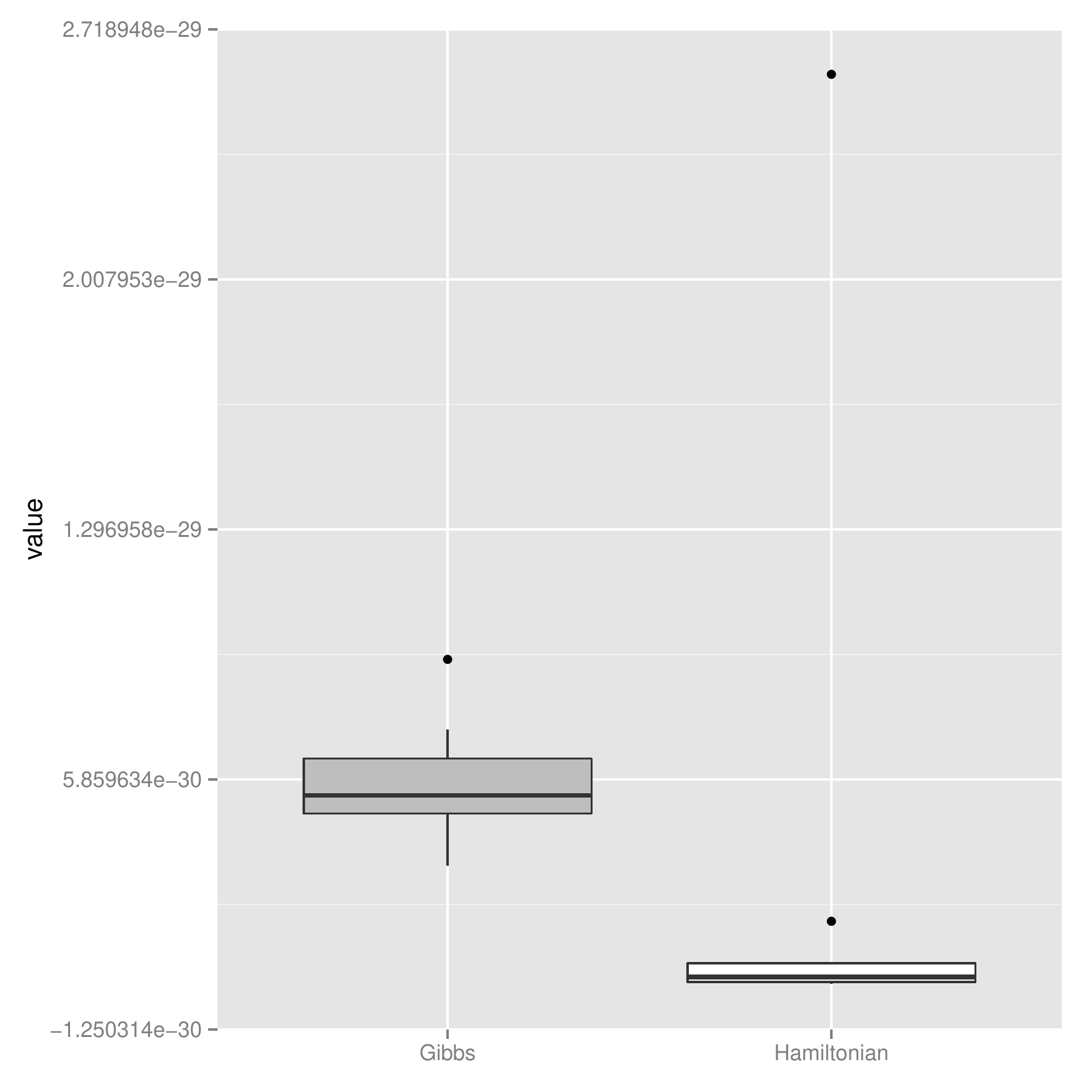} }
\end{tabular}
\caption{Estimates of Orthant probabilities Gibbs vs HMC}
\label{fig:HMC}
\begin{minipage}{15cm}
\footnotesize{Covariance matrices generated from random samples with heavy tails. The various dimension are simulated with the same algorithm and same seed, such that the small ones are subsets of the others. The grey boxplot corresponds to the Gibbs sampler the white to the HMC. The Gibbs sampler seem to have smaller variance and no outliers.}
\vspace*{3mm}
\hrule
\end{minipage}
\end{center}
\end{figure}
\end{document}